\newcommand{\tnm}[1]{{\textbf{\texttt{#1}}}} 
\newcommand{\sphere}{\mathcal{S}}
\newcommand{\deq}{\mathrel{\stackrel{\scriptscriptstyle\Delta}{=}}}
  \providecommand\BibTeX{{%
    \normalfont B\kern-0.5em{\scshape i\kern-0.25em b}\kern-0.8em\TeX}}}
\begin{document}

\title{Robustness of Safety for Linear Dynamical Systems: Symbolic and Numerical Approaches}



\author{Bineet Ghosh}
\email{bineet@cs.unc.edu}
\orcid{0000-0002-1371-2803}
\affiliation{%
  \institution{University of North Carolina at Chapel Hill}
  \city{Chapel Hill}
  \state{North Carolina}
  \country{USA}
}

\author{Parasara Sridhar Duggirala}
\email{psd@cs.unc.edu}
\affiliation{%
  \institution{University of North Carolina at Chapel Hill}
  \city{Chapel Hill}
  \state{North Carolina}
  \country{USA}
}

\begin{abstract}
  In this paper, we study the robustness of safety properties of a linear dynamical system with respect to model uncertainties. Our paper involves three parts. In the first part, we provide symbolic (analytical) and numerical (representation based) techniques for computing the reachable set of uncertain linear systems. The analytical techniques extend the classical methods from perturbation theory to interval matrices. In the representation based technique, the effect of uncertainty is computed separately and added (Minkowski summation) with the reachable set of the nominal system. In the second part, we present the relationship between the singular values of the dynamics and the reachable set. Leveraging the existing literature, we determine the cells in the system dynamics that are the most and least sensitive to perturbation. We further prove a relationship between the reachable set of a linear uncertain system and the maximum singular value of the uncertain dynamics matrix. Finally, we propose two heuristics to compute the robustness threshold of the system --- the maximum uncertainty that can be introduced to the system without violating the safety property. We evaluate the reachable set computation techniques, effects of singular values, and estimation of robustness threshold on two case studies from varied domains, illustrating the applicability, practicality and scalability of the artifacts, proposed in this paper, on real-world examples. We further evaluate our artifacts on several linear dynamical system benchmarks. To the best of the authors' knowledge, this is the first work to: (i) extend perturbation theory to compute reachable sets of linear uncertain systems, (ii) leverage the relationship between the reachable set of a linear system and the maximum singular values to determine the effect of uncertainties and (3) estimate the threshold of robustness that can be tolerated by the system while remaining safe.
  
\end{abstract}


\keywords{Reachable Sets, Linear Uncertain Systems, Parameterized Systems, Formal Methods, Robust Reachable Set, Formal Verification}

\maketitle

\section{Introduction}
Safety analysis of cyber-physical systems require a precise mathematical model of the system. 
Since constructing a precise model is subjected to measurement and sensor uncertainties, it is important to understand the robustness of the safety analysis, \textit{i.e.}, whether the safety specification holds true for some perturbations in the model.
For safety critical systems, understanding robustness of safety analysis is even more essential.

A (continuous time) linear dynamical system without uncertainties is defined as $\dot{x}=Ax$, where the the dynamics matrix $A$ is assumed to be known accurately, $i.e.$, all the elements of the matrix are known accurately. 
Consider the following Example \ref{ex:linearSys} of a linear system.
\begin{example}
\label{ex:linearSys}
Consider the continuous time linear dynamical system $\dot{x} = Ax$, where $A$ is 
$
\begin{bmatrix}
    1  & c \\
    0  & 2 
\end{bmatrix}
$
and $c$ represents a system parameter.
\end{example}
Traditional methods of reachable set computation of linear systems assume a fixed value of $c$ (say, -1) \cite{10.1007/978-3-540-31954-2_19,10.1007/978-3-540-31954-2_17,10.1007/978-3-642-22110-1_30,10.1145/3049797.3049808}.
But in reality, the value of $c$ is not known exactly, instead, it is only estimated. Therefore, considering the sensing and actuation uncertainties, one should think of the value as a parameter that can take a range of values  \big(say, $c \in [-1.1,-0.9]$\big).
Often, the safety proofs provided using traditional reachable set based techniques cannot account for such uncertainties.
Further, the reachable set techniques do not inform us about the amount of uncertainties that can be tolerated while maintaining safety.


In this paper, we consider a model with uncertainties --- we define continuous time linear uncertain systems as $\dot{x}=\Lambda_u x$ (corresponding discrete time uncertain system is defined in \cite{10.1145/3358229}), where the uncertainties in the model are considered as variables in $\Lambda_u$ that can take values from a bounded range. 
%
%
Interval matrix, a matrix where elements are intervals, is a convenient representation of such a matrix $\Lambda_u$.
%
The linear uncertain system $\dot{x}=\Lambda_u x$ can equivalently be represented by separating the nominal and the perturbed dynamics as $\dot{x}=(A+\Lambda) x$, where $\Lambda_u=A+\Lambda$.

Our goal in this paper is to determine the uncertainties $\Lambda$ that can be tolerated by the matrix $A$ without violating the safety property, we call it the \emph{robustness threshold}.
For computing this, we propose a symbolic and a set-based (numeric) method to compute reachable sets that take into account the effect of uncertainties.
Second, we demonstrate the relationship between uncertainties and reachable sets using the maximum singular values.
Finally, we propose two heuristics to compute the robustness threshold.

With the aforementioned introduction to the problem of uncertainty in safety verification, we now, explicitly, layout the contributions of this paper.

\vspace{1ex}
\noindent \textbf{{\large Contributions:}}
\begin{enumerate}
    \item A symbolic approach, using classical approaches from perturbation theory, to compute reachable sets of linear uncertain systems. In previous works on uncertain linear systems, extensions of earlier proposed perturbation theory based methods have not been considered. To the best of the authors' knowledge, this is the first work to extend perturbation theory to compute reachable sets of linear uncertain systems.
    Given a continuous time linear system $\dot x=A x$, and an initial set $\theta$, the reachable set at time $t$ is given by $\texttt{RS}_{\theta}(A,t)$ = $e^{At} \theta_0$. But, if there is some perturbation $E \in \mathbb{R}^{n \times n}$ introduced to the system, $i.e.$, the system now becomes $\dot x=(A+E) x$, the reachable set $\texttt{RS}_{\theta}(A,t)$ becomes invalid, and the computation of $e^{(A+E)t}$ needs to be performed from scratch to get the reachable set of the perturbed system.
    In \cite{kaagstrom1977bounds} and \cite{10.2307/2156675}, closed form upper bounds on the sensitivity of matrix exponential to such perturbations are given. Given $A$, $E$ $\in \mathbb{R}^{n \times n}$, closed form upper bound $\phi_{A,E}(t)$ is computed. This $\emph{relative error bound}$ $\phi_{A,E}(t)$ can be used as a \emph{bloating factor} for $\texttt{RS}_{\theta}(A,t)$ to compute the reachable set of $\dot x=(A+E) x$, without the need to compute $e^{(A+E)t}$ from scratch.
    In this work, we define linear uncertain systems as $\dot x=(A+\Lambda)x$, where $\Lambda \subset \mathbb{R}^{n \times n}$ is an interval matrix. Therefore, to use the idea described above, we use the results from \cite{farhadsefat2011norms} on interval matrices and extend the results of \cite{kaagstrom1977bounds} and \cite{10.2307/2156675} on closed form upper bounds of $\phi_{A,E}(t)$ to interval matrices, $i.e.$, $\phi_{A,\Lambda}(t)$. This approach is discussed in Section \ref{sec:perturbation}.
    \item A set-based (numeric) approach to compute reachable sets of linear uncertain systems. For discrete time linear systems, the reachable set can be computed as a Minkowski sum of its autonomous part and the effect of the input (or perturbation) (\cite{girard2006efficient} and \cite{bak2017simulation}). If at a time step $k$, the reachable set of the autonomous part and the effect of input is $\Omega_k$ and $U_k$ respectively, then the cumulative reachable set of the system, at time step $k$, is given by $\Omega_k \oplus U_k$. We use this property of linear systems to compute the reachable set of a linear uncertain system. 
    Finally, we propose two methods to improve the performance of this approach. We discuss the details of this technique in Section \ref{sec:setBased}.
    %
    %
    %
    \item Establish a relationship between the reachable set of a linear uncertain system and the maximum singular value of the uncertain dynamics. 
    To the best of the authors' knowledge, this is the first work to establish such a relationship. 
    Leveraging this relationship, we propose two more useful artifacts that provide further insights on the behavior of linear uncertain systems: firstly, an algorithm to order the cells of the dynamics matrix based on its sensitivity to perturbation with respect to reachable sets; secondly, two heuristics to compute \textit{robustness threshold}.
    Informally, given a nominal system $\dot{x}=Ax$ and an unsafe set $\mathcal{U}$, the \textit{robustness threshold} is defined as the perturbation $\Lambda$, \textit{s.t}, the reachable set of the system $\dot{x}=(A+\Lambda)x$ does not intersect with the unsafe set $\mathcal{U}$, \textit{i.e.}, the system does not violate safety after the introduced perturbation. 
    \end{enumerate}
This paper is organized as follows.
In Section \ref{sec:relatedWork}, we provide related works. In Section \ref{sec:prelims}, we provide the notations and the formal definitions that has been used in the rest of the paper.
In Section \ref{sec:perturbation}, we discuss the perturbation theory based methods.
In Section \ref{sec:setBased}, we present a reachable set computation technique that uses the generalized-star representation, and present two techniques to improve its performance.
In Section \ref{sec:uncertainties}, we establish the relationship between the reachable set of linear uncertain systems and the maximum singular value of the uncertain dynamics matrix. We further provide an algorithm to order the cells of the dynamics matrix based on its sensitivity to perturbation, and two heuristics to compute robustness threshold.
In Section \ref{sec:caseStudies}, we provide two extensive case studies from varied domains --- a Mars Rover simulated on a real Martian Terrain, and an automated Anaesthesia delivery --- to evaluate all the artifacts proposed in this paper. We note that the purpose of this case study, from two varied domains, is not just limited to demonstrate the scalability of the proposed artifacts, but also demonstrate the applicability (and usability) of the artifacts on a variety of real life examples from robotics to medical devices. In Section \ref{sec:evaluation}, we further demonstrate our proposed artifacts on several linear dynamical system benchmarks.
We present our conclusions in Section \ref{sec:conclusion}.

\section{Related Work}
\label{sec:relatedWork}

In this section, we review the techniques for computing reachable sets of linear uncertain systems from the literature.
In \cite{10.1145/3358229}, a class of uncertainties are identified by a set of sufficient conditions on the structure of the dynamics matrix $\Lambda_u$. 
For such classes of uncertainties, the exact reachable set of the linear dynamical system can be computed very efficiently.
Unfortunately, this technique is unable to handle arbitrary set of uncertainties, whereas the techniques proposed in this paper can handle arbitrary set of uncertainties.
In \cite{DBLP:conf/emsoft/LalP15}, given a linear uncertain system $\dot{x}=\Lambda_u x$, where there are parametric uncertainties in $\Lambda_u$; the authors' provide a sampling interval $\delta>0$, given an $\epsilon>0$, $s.t.$, the piecewise bilinear function, approximating the solution by interpolating at these sample values, is within $\epsilon$ of the original trajectory. Contrary to \cite{DBLP:conf/emsoft/LalP15}, our techniques does not rely on sampling and therefore is independent of the size of uncertainties.
In \cite{althoff2011reachable}, given a continuous time linear system with input, the system is discretized and reachable set for consecutive time intervals are computed. 
For a given time step, the reachable set is computed in two steps steps: 
(i) For each matrix in the uncertain matrix, the \emph{state transition matrix} is expressed using the \emph{Peano-Baker} series. Then the approximate numerical solution to the series is obtained, iteratively, using \emph{Riemann sums}. 
(ii) Finally, using Zonotope representation, they propose a method to compute a convex-hull of the set that over-approximates the result of step (i) for all possible matrices in the uncertain matrix.
Unlike \cite{althoff2011reachable}, we propose two techniques, one using perturbation theory, and other using set-based representation where the reachable set is represented as Minkowski sum of its unperturbed part with the effect of perturbation.
In \cite{COMBASTEL20114525}, the authors provide an algorithm for computation of envelopes enclosing the possible states and/or outputs of a class of linear uncertain systems. 
It extends an existing algorithm based on Zonotopes so that it can efficiently propagate structured parametric uncertainties. 
Note that this work can handle only a class of uncertainties, but unlike \cite{10.1145/3358229}, they don't provide an exact reachable set --- the over-approximate flow-pipe has been empirically shown to handle the wrapping effect (\emph{i.e.} the tightness of over-approximation).

Unlike all these works mentioned, we propose: 
\begin{enumerate}
    \item A symbolic approach, using classical approaches from perturbation theory, which is computationally very fast.
    \item A set based representation (numeric) approach, where the reachable set is represented as a Minkowski sum of its unperturbed part with the effect of perturbation. 
    We further provide two methods to improve the performance of our set based representation method.
    \item We provide several other artifacts that provide more understanding on types and effects of various classes of uncertainties on a linear system.
    \item We provide two extensive case studies, from two different domains, to demonstrate the broad-spectrum applicability of the proposed artifacts. We further stress-test our methods on a battery of standard linear system benchmarks.
\end{enumerate}

Linear uncertain systems can also be modelled as a non-linear system. Some of the works that deal with computing reachable sets of non-linear systems are: \cite{7809839,10.1007/978-3-642-39799-8_18,6987596,10.1007/978-3-662-46681-0_5,10.1007/978-3-642-24690-6_13,10.1007/978-3-662-46681-0_15,10.1007/978-3-319-02444-8_37,ARCH15:An_Introduction_to_CORA}.

\section{Notations and Preliminaries}
\label{sec:prelims}

The state $x$ of the linear systems, considered in this work, is $x \in \mathbb{R}^n$.
Given $\delta > 0$, $B_{\delta}(x) = \{ y~|~ ||y-x|| \leq \delta \}$.
For a set $S \subset \mathbb{R}^n$, $B_{\delta}(S) = \cup_{x \in S} B_{\delta}(x)$. %
This operation over $S$ is called \emph{bloating} the set $S$ by $\delta$.
Given two sets $S_1$, $S_2$, its Minkowski sum is $S_1 \oplus S_2 = \{x_1 + x_2 ~|~ x_1 \in S_1, x_2 \in S_2\}$.
Given $i,j \in \mathbb{Z}$, with $i \leq j$, $[\![i, j ]\!]$ denotes the set $\{i, i+1, \ldots, j\}$.
A hyper-sphere with radius $k \in \mathbb{R}_{\ge 0}$ is denoted as $\mathcal{S}(k) \subset \mathbb{R}^{n}$; thus, a unit sphere is denoted as $\mathcal{S}(1)$.
%
%
%
We denote the usual notion of absolute value of $a \in \mathbb{R}$ as $|a| \in \mathbb{R}_{\ge 0}$.
We overload the operator $|.|$ for matrices as well --- given a matrix $M \in \mathbb{R}^{n \times m}$, $|M|$ is a matrix such that, $\forall_{i \in [\![1,n]\!],j \in [\![1,m]\!]}$ $|M|[i,j] = |M[i,j]|$. 
Given a matrix $M \in \mathbb{R}^{n\times n}$, the maximum singular value (Max SV) of $M$ is denoted as $\sigma(M)$, the determinant of $M$ is denoted as $det(M)$, and $e^{M} = I + \frac{M}{1!} + \frac{M^2}{2!} + \ldots$.

\begin{definition}
\label{def:cls}
\textbf{Continuous time Linear System}. Given a matrix $A \in \mathbb{R}^{n\times n}$, a continuous time linear dynamical system is denoted as $\dot{x} = Ax$. Trajectories of linear dynamical systems that give the state of the system at time $t$ starting from initial state $x_0$ is given as $\xi(x_0,t) = e^{At}x_0$.
\end{definition}


\begin{definition}
\label{def:intervalMatrix}
\textbf{Interval Matrix} (as defined in Section 2 of \cite{farhadsefat2011norms}). Given two matrices $A_{min}, A_{max} \in \mathbb{R}^{n \times n}$, \textit{s.t.}, $\forall i, j$ $A_{min}[i,j] \le A_{max}[i,j]$: we define an interval matrix $\Lambda \subset \mathbb{R}^{n \times n}$, also denoted as $[A_{min},A_{max}]$, where the $[i,j]^{th}$ element is the interval $[A_{min}[i,j], A_{max}[i,j]]$. The \emph{center} of $\Lambda$ (also denoted as $C_{\Lambda}$) is defined as $\frac{1}{2}(A_{min} + A_{max})$. Similarly, $\Delta_{\Lambda} = \frac{1}{2}(A_{max} - A_{min})$.
A matrix $E \in \Lambda$ if and only if $\forall i,j, E[i,j] \in \Lambda[i,j]$.
\end{definition}

Given two interval matrices $\Lambda_1,\Lambda_2 \subset \mathbb{R}^{n \times n}$, operations like addition, subtraction and multiplication are performed according to usual matrix algebra, where each operation is performed according to interval arithmetic. The notations are usual $\Lambda_1 \star \Lambda_2$, where $\star=\{+, \times, -\}$.
Given an interval matrix $\Lambda \subset \mathbb{R}^{n \times n}$, the singular value decomposition (SVD) of the interval matrix $\Lambda$ is given as $ \Lambda = \bigcup A = \bigcup U \Sigma V^*$, where $U,V$ are unitary matrices and $\Sigma$ is a diagonal matrix; we assume the decomposition is such that, the diagonal matrix $\Sigma$ is sorted in descending order.

\begin{definition}
\label{def:matNorms}
\textbf{Matrix Norms}. Given a matrix $M \in \mathbb{R}^{n \times n}$, its matrix 2 norm is denoted as $||M||_2$, and $||M||_F$ is the matrix Frobenius norm. We also use $||M||$ to denote $||M||_p$, where $p$ can be anything in $\{2, F\}$. For any matrix $M$, $||M||_2 \le ||M||_F$. 
\end{definition}

\begin{definition}
\label{def:matIntNorms}
\textbf{Interval Matrix Norm} (as defined in Theorem 1 of \cite{farhadsefat2011norms}). Given interval matrix $\Lambda$, $||\Lambda||_p=sup \{||A||_p~|~A \in \Lambda\}$, $p$ $\in$ $\{2,F\}$. We also use $||\Lambda||$ to denote $||\Lambda||_p$, where $p$ is obvious from the context.
\end{definition}

\begin{definition}
\label{def:clus}
\textbf{Linear Uncertain System}. An linear uncertain system is denoted as $\dot{x} = (A+\Lambda) x$, where $A \in \mathbb{R}^{n\times n}$ is a matrix and $\Lambda \subset \mathbb{R}^{n \times n}$ is an interval matrix. The set of states reached by an linear uncertain system after time $t$ is $\xi_{A+\Lambda}(x_0,t) = \{e^{(A+M)t}x_0~|~M \in \Lambda\}$. Here, $\Lambda$ denotes the model uncertainty. We also refer linear uncertain systems as uncertain linear system --- uncertain being the operative word. 
\end{definition}


One can extend the matrix exponential to linear uncertain systems as $e^{(A+\Lambda) t} = I + \frac{(A+\Lambda) t}{1!} + \frac{((A+\Lambda)t)^2}{2!} + \ldots$ where all the addition and multiplication operations are performed using interval arithmetic. It is trivial to observe that $\xi_{A+\Lambda}(x_0,t) = e^{(A+\Lambda) t} x_0$.


\begin{definition} \textbf{Max SV Candidate Matrix}.
\label{def:maxSVCand}
Given an interval matrix $\Lambda \subset \mathbb{R}^{n \times n}$, the Max SV Candidate $A_{\text{max}}$ is defined as $\Xi(\Lambda) = A_{\text{max}}$, where $\forall_{A \in \Lambda}~\sigma(A) \le \sigma(A_{\text{max}})$. Note that Max SV Candidate need not be unique. Let, $\sigma(A_{\text{max}})=\sigma_{\text{max}}$.
\end{definition}

\begin{definition} \textbf{Interval Matrix Transformation}.
Given a transformation $f(\theta) = \Lambda \theta$, where $\theta \subset \mathbb{R}^n$ and $\Lambda \subset \mathbb{R}^{n \times n}$, the transformation $f(\theta)$ is defined as 
$f(\theta)= \bigcup_{A \in \Lambda} A \theta$.
\end{definition}

\section{Reachability Using Perturbation Analysis}
\label{sec:perturbation}

In the early days of studying numerical techniques for solving ordinary differential equations, there was a significant interest in bounding the errors introduced by using finite precision representation of real numbers. Seminal works in this domain, such as \cite{kaagstrom1977bounds} and \cite{10.2307/2156675}, investigated analytical methods for computing upper bounds on the sensitivity of matrix exponential.
%
Given a linear dynamical system $\dot{x} = Ax$ and its perturbation $\dot{x} = (A+E)x$, these methods presented expressions to upper bound the relative distance between the trajectories of the nominal (unperturbed) and the perturbed system. The relative distance, denoted by $\phi_{(A,E)}(t)$, is given as:
\begin{equation}
\phi_{(A,E)}(t) = \frac{||e^{(A+E)t}-e^{At}||}{||e^{At}||}.
\end{equation}

Notice that \cite{kaagstrom1977bounds} and \cite{10.2307/2156675} were applicable for specific matrices $A$ and $E$, since we deal with bounded model uncertainties, we extend the same definition from a matrix $E$ to an interval matrix $\Lambda$. We extend the notation to include interval matrices as follows:
\begin{equation}
\phi_{(A,\Lambda)}(t) = sup_{E \in \Lambda} \frac{||e^{(A+E)t}-e^{At}||}{||e^{At}||}.
\end{equation}

Using the analytical expressions for $\phi_{(A,E)}(t)$ that were presented in \cite{kaagstrom1977bounds} and \cite{10.2307/2156675}, we provide analytical expressions for $\phi_{(A,\Lambda)}(t)$, involving interval matrices.
To compute $\phi_{(A,\Lambda)}(t)$, we leverage some of the results on interval matrix norms that are provided in~\cite{farhadsefat2011norms}.

\begin{lemma}
\label{lem:phi1}
\textbf{[From~\cite{kaagstrom1977bounds}, Table 4.1 (4.14)]}
Given a matrix $A$ and perturbation $E$ $\in$ $\mathbb{R}^{n \times n}$,
\begin{eqnarray}
&\phi_{(A,E)}(t) \leq  p_{n-1}(||A||_2t) \times \Big(exp\big(p_{n-1}(||A||_2t)||E||_2t\big)-1\Big). \label{eq:phi1}
\end{eqnarray}
where $p_{n-1}(x) = \Sigma_{k=0}^{n-1}\frac{x^k}{k!}$.
\end{lemma}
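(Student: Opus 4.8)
The plan is to reduce the statement to the classical Kågström estimate by expressing the difference of exponentials through a variation-of-parameters (Peano--Baker / Dyson) expansion and then bounding it term by term. First I would establish the integral identity $e^{(A+E)t}-e^{At}=\int_0^t e^{A(t-s)}E\,e^{(A+E)s}\,ds$, which follows by checking that $Z(t)=e^{(A+E)t}-e^{At}$ solves the matrix initial value problem $\dot Z = AZ + E\,e^{(A+E)t}$ with $Z(0)=0$, and applying the usual Duhamel formula. Iterating this identity, i.e.\ substituting the same expression back in for $e^{(A+E)s}$, produces the Dyson series $e^{(A+E)t}-e^{At}=\sum_{k\ge 1} I_k$, where $I_k$ is the $k$-fold iterated integral over the ordered simplex $\{0\le s_1\le\cdots\le s_k\le t\}$ of the product $e^{A(t-s_k)}\,E\,e^{A(s_k-s_{k-1})}\,E\cdots E\,e^{As_1}$.

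Next I would bound each $I_k$ by submultiplicativity of the $2$-norm: pulling out $k$ factors of $\|E\|_2$ and integrating the product of the $k+1$ matrix-exponential norms over the simplex, whose Euclidean volume is $t^k/k!$. The decisive step is to control each exponential block $\|e^{A\tau}\|_2$ by the truncated-exponential polynomial $p_{n-1}(\|A\|_2 t)$ rather than the full exponential $e^{\|A\|_2 t}$; this is exactly where the hypothesis $A\in\mathbb{R}^{n\times n}$ enters, since by Cayley--Hamilton the matrix exponential is a polynomial in $A$ of degree at most $n-1$, so the relevant series truncates after $n$ terms. Distributing these polynomial factors so that they telescope against the $\|e^{At}\|_2$ in the denominator of $\phi_{(A,E)}(t)$, I expect the per-term estimate $\|I_k\|_2/\|e^{At}\|_2 \le p_{n-1}(\|A\|_2 t)^{\,k+1}\,\|E\|_2^{\,k}\,t^k/k!$.

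Finally I would resum, writing each summand as $p_{n-1}(\|A\|_2 t)\cdot\big(p_{n-1}(\|A\|_2 t)\|E\|_2 t\big)^k/k!$ and recognising $\sum_{k\ge1} x^k/k! = e^{x}-1$ with $x=p_{n-1}(\|A\|_2 t)\|E\|_2 t$, which collapses the series to $p_{n-1}(\|A\|_2 t)\big(\exp(p_{n-1}(\|A\|_2 t)\|E\|_2 t)-1\big)$ and yields the claimed bound \eqref{eq:phi1}.

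The hard part will be the exponential-block estimate of the second paragraph: obtaining the polynomial factor $p_{n-1}(\|A\|_2 t)$ in place of the naive $e^{\|A\|_2 t}$, and in particular arranging the bookkeeping so that exactly one such factor survives outside the exponential while the remaining ones contribute the coefficient inside it. Since this inequality is quoted verbatim from \cite{kaagstrom1977bounds}, an acceptable alternative is simply to invoke that reference for the underlying matrix-exponential estimate and confine our own contribution to verifying that nothing in the derivation requires $E$ to be a fixed matrix --- this is precisely what licenses the subsequent passage to the interval matrix $\Lambda$ via the supremum over $E\in\Lambda$.
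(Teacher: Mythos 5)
The paper does not actually prove this lemma: it is imported verbatim as a known estimate from K\r{a}gstr\"om (Table 4.1, (4.14)), and the paper's contribution begins only afterwards, with the observation that the right-hand side is monotone in $\|E\|_2$ so that one may pass to $\sup_{E\in\Lambda}\|E\|_2=\|\Lambda\|_2$. Your closing fallback --- cite the reference for the matrix-exponential estimate and reserve your own effort for the passage to interval matrices --- is therefore exactly what the paper does, and is the right division of labour here.

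Your attempted from-scratch derivation, however, has a genuine gap at precisely the step you flag as decisive. The Duhamel identity, the iterated Dyson expansion, the simplex volume $t^k/k!$, and the final resummation into $e^x-1$ are all fine. What fails is the exponential-block estimate $\|e^{A\tau}\|_2\le p_{n-1}(\|A\|_2\tau)$. Cayley--Hamilton tells you that $e^{A\tau}=\sum_{k=0}^{n-1}c_k(\tau)A^k$ for some scalar functions $c_k$, but those coefficients are \emph{not} the truncated Taylor coefficients $\tau^k/k!$, so the series does not ``truncate'' in the sense you need. The claim is already false for $n=1$: there $p_0\equiv 1$, while $\|e^{a\tau}\|=e^{a\tau}>1$ for $a>0$; more generally $\|e^{A\tau}\|_2$ grows exponentially for any $A$ with an eigenvalue of positive real part, whereas $p_{n-1}$ grows only polynomially. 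The actual source of the polynomial factor in K\r{a}gstr\"om's bound is the Schur decomposition $A=Q(D+N)Q^{*}$ with $N$ strictly upper triangular (hence $N^n=0$): expanding $e^{(D+N)\tau}$ around $e^{D\tau}$, every iterated-integral term containing $n$ or more factors of $N$ vanishes, which yields $\|e^{A\tau}\|_2\le e^{\alpha(A)\tau}\,p_{n-1}(\cdot)$ with $\alpha(A)$ the spectral abscissa. The bookkeeping you were worried about is then resolved not by telescoping polynomial factors but by telescoping the scalars $e^{\alpha(A)(t-s_k)}\cdots e^{\alpha(A)s_1}=e^{\alpha(A)t}$, which cancel against the lower bound $\|e^{At}\|_2\ge\rho(e^{At})=e^{\alpha(A)t}$ in the denominator of $\phi_{(A,E)}(t)$. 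Without the spectral-abscissa factor and that cancellation, your per-term estimate cannot be salvaged. Since repairing this amounts to reproducing K\r{a}gstr\"om's argument, the citation route is the appropriate one; the only thing worth verifying in this paper's context is, as you say, that the bound's dependence on $E$ is solely through $\|E\|_2$, monotonically.
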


Notice from Equation~\ref{eq:phi1} that $\phi_{(A,E)}(t)$ is monotonically increasing with $||E||_2$. Therefore, to extend this analysis to interval matrices, one merely needs to compute the supremum of the set of 2-norms of all the matrices in the interval matrix.

\begin{lemma}
\label{lem:supremum}
$\sup\limits_{E \in \Lambda}~\{||E||_2\}=||\Lambda||_2$
\end{lemma}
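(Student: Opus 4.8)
The plan is to observe that the statement is, essentially, a direct consequence of the definition of the interval matrix norm. By Definition~\ref{def:matIntNorms}, the quantity $||\Lambda||_2$ is \emph{defined} to be $\sup\{||A||_2 ~|~ A \in \Lambda\}$. The expression on the left-hand side of the lemma, $\sup_{E \in \Lambda}\{||E||_2\}$, is the same supremum with the bound variable $A$ renamed to $E$. Thus the core of the argument is simply to match the notation used in the definition of $\phi_{(A,\Lambda)}(t)$ (which ranges over $E \in \Lambda$) with the notation of Definition~\ref{def:matIntNorms}, after which the two sides coincide by definition.

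The one substantive point worth recording is that this supremum is well defined, finite, and in fact attained. First I would note that the interval matrix $\Lambda = [A_{min}, A_{max}]$, viewed as a subset of $\mathbb{R}^{n \times n} \cong \mathbb{R}^{n^2}$, is the Cartesian product of the closed bounded intervals $[A_{min}[i,j], A_{max}[i,j]]$ over all $i,j$; hence $\Lambda$ is closed and bounded, and therefore compact by the Heine--Borel theorem. Since the map $A \mapsto ||A||_2$ is continuous, the extreme value theorem guarantees that the supremum is achieved by some matrix $E^\star \in \Lambda$, so the supremum is a genuine maximum and is finite. Because $||A||_2 = \sigma(A)$, this maximizer $E^\star$ is precisely a Max SV Candidate in the sense of Definition~\ref{def:maxSVCand}, i.e. $||\Lambda||_2 = ||E^\star||_2 = \sigma(\Xi(\Lambda)) = \sigma_{\text{max}}$, which ties the lemma back to the singular-value machinery used later in the paper.

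I do not anticipate a genuine obstacle here: the lemma is definitional, and the only care required is notational consistency together with the compactness argument establishing existence and attainment of the supremum. The genuinely hard question --- producing a closed-form, efficiently computable expression for $||\Lambda||_2$ --- is a separate matter (and is NP-hard for the spectral norm in general), but it is not what this lemma asserts; the statement as written only equates the supremum of $2$-norms with the interval matrix norm, which holds immediately. The purpose of isolating it as a lemma is to license substituting $||\Lambda||_2$ for $\sup_{E \in \Lambda} ||E||_2$ inside the monotone bound of Lemma~\ref{lem:phi1}, thereby extending $\phi_{(A,E)}(t)$ to $\phi_{(A,\Lambda)}(t)$.
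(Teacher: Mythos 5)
Your proposal is correct and matches the paper's own argument, which likewise treats the lemma as an immediate consequence of Definition~\ref{def:matIntNorms} (citing Theorem~7 of the interval-norm reference for computability of $||\Lambda||_2$). The added compactness remark showing the supremum is attained is a harmless strengthening the paper does not bother to state.
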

As $||\Lambda||_2$ is obtained using Theorem 7 of~\cite{farhadsefat2011norms}, where $||\Lambda||_2$ is defined in Definition \ref{def:matIntNorms}, the above lemma holds true.

\begin{theorem}
\label{thm:supBloatForm}
$\phi_{(A,\Lambda)}(t) \leq p_{n-1}(||A||_{2}t)(exp(p_{n-1}(||A||_2 t)||\Lambda||_2 t)-1)$.
\end{theorem}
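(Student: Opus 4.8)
The plan is to obtain the bound by directly composing the two lemmas that precede the statement, exploiting the monotonicity of the Kågström bound in the norm of the perturbation. The starting point is the definition of $\phi_{(A,\Lambda)}(t)$ as a supremum, namely $\phi_{(A,\Lambda)}(t) = \sup_{E \in \Lambda} \phi_{(A,E)}(t)$, so the entire argument reduces to controlling the scalar quantity $\phi_{(A,E)}(t)$ uniformly over all $E \in \Lambda$ and then passing to the supremum.

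First I would fix an arbitrary $E \in \Lambda$ and apply Lemma \ref{lem:phi1}, which gives $\phi_{(A,E)}(t) \le g(||E||_2)$, where I abbreviate $g(s) \deq p_{n-1}(||A||_2 t)\big(\exp(p_{n-1}(||A||_2 t)\, s\, t) - 1\big)$. The crucial structural observation --- already flagged in the remark following Lemma \ref{lem:phi1} and in Equation~\ref{eq:phi1} --- is that $g$ is a continuous, monotonically increasing function of $s \ge 0$, since $p_{n-1}(||A||_2 t) \ge 0$ and $\exp$ is increasing. Because the bound holds for every $E \in \Lambda$, taking the supremum over $E$ on both sides yields $\phi_{(A,\Lambda)}(t) \le \sup_{E \in \Lambda} g(||E||_2)$.

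The key step is then to evaluate $\sup_{E \in \Lambda} g(||E||_2)$. Since $g$ is monotonically increasing and continuous, its supremum over the family $\{||E||_2 : E \in \Lambda\}$ is obtained precisely by maximizing the argument, so $\sup_{E \in \Lambda} g(||E||_2) = g\big(\sup_{E \in \Lambda}||E||_2\big)$. By Lemma \ref{lem:supremum} we have $\sup_{E \in \Lambda}||E||_2 = ||\Lambda||_2$, consistent with the interval-matrix-norm of Definition \ref{def:matIntNorms}. Substituting $||\Lambda||_2$ for $s$ in $g$ reproduces exactly the right-hand side of the theorem, which completes the argument.

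The only point requiring genuine care --- and therefore the main obstacle --- is the interchange $\sup_E g(||E||_2) = g\big(\sup_E ||E||_2\big)$, which is not valid for an arbitrary $g$ and relies essentially on $g$ being monotone and continuous in its argument; I would state this explicitly rather than treat it as automatic. A secondary subtlety is whether the supremum is actually attained: since $\Lambda$ is a bounded, closed box and the spectral norm is continuous in the entries, the supremum is in fact a maximum, so no limiting argument beyond continuity of $g$ is needed. Everything else is routine substitution.
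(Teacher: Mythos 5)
Your proposal is correct and follows essentially the same route as the paper's own proof: bound $\phi_{(A,E)}(t)$ uniformly via Lemma~\ref{lem:phi1}, use monotonicity of the bound in $||E||_2$, and invoke Lemma~\ref{lem:supremum} to replace $\sup_{E\in\Lambda}||E||_2$ by $||\Lambda||_2$. You simply make explicit the interchange $\sup_E g(||E||_2) = g(\sup_E ||E||_2)$ that the paper leaves implicit, which is a welcome bit of extra care rather than a departure.
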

\begin{proof}
Observe that $\phi_{(A,E)}(t)$ (in Lemma \ref{lem:phi1}) monotonically increases with $||E||_2$, and Lemma \ref{lem:supremum} states that the 2 norm of the interval matrix $\Lambda$ is the supremum of 2-norms of all the matrices in the interval matrix. Therefore, we have $\phi_{(A,\Lambda)}(t) \leq p_{n-1}(||A||_{2}t)(exp(p_{n-1}(||A||_2 t)||\Lambda||_2 t)-1)$.
\end{proof}

We denote the upper bound provided in Theorem~\ref{thm:supBloatForm} as~\textbf{\texttt{Kagstrom1}}.

In the next two Theorems, we provide two different upper bounds of $\phi_{(A,\Lambda)}(t)$ than in Theorem \ref{thm:supBloatForm}.

\begin{lemma}
\label{lem:kag2}
\textbf{[From~\cite{kaagstrom1977bounds}, Table 4.1 (4.12)]} Given a matrix A and perturbation $E$ $\in$ $\mathbb{R}^{n\times n}$,
\begin{equation}
\phi_{(A,E)}(t) \leq K(SD)e^{\epsilon t}( e^{K(SD)\cdot||E||_2 t} - 1).
\end{equation}
where (a) for a matrix $M$, $K(M)$ is its condition number $|| M || \cdot ||M^{-1} ||$, (b) $S J S^{-1}$ is the Jordan form of the matrix $A$, and (c) $D$ is a diagonal matrix, \textit{s.t.}, $||D^{-1} J D ||_{2} \leq \epsilon$.
\end{lemma}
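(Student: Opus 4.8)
Since this statement is quoted verbatim from Table~4.1~(4.12) of \cite{kaagstrom1977bounds}, the most economical option is simply to cite it; but a self-contained derivation is instructive, and the plan is to reduce the matrix-exponential sensitivity bound to a Duhamel (variation-of-parameters) estimate after conditioning the problem with the similarity transform the statement already names. Writing $T = SD$, the Jordan factorization $A = SJS^{-1}$ together with $J = D(D^{-1}JD)D^{-1}$ gives $A = T\tilde J T^{-1}$ with $\tilde J = D^{-1}JD$ and, by hypothesis, $\|\tilde J\|_2 \le \epsilon$. This is the step that trades the (possibly large) norm of $A$ for the controlled quantity $\epsilon$, at the price of the transformation's condition number $K(T) = \|T\|_2\,\|T^{-1}\|_2 = K(SD)$.

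First I would record the Duhamel identity $e^{(A+E)t}-e^{At} = \int_0^t e^{A(t-s)}E\,e^{(A+E)s}\,ds$, which holds because $Z(t)=e^{(A+E)t}-e^{At}$ solves $\dot Z = AZ + E\,e^{(A+E)t}$ with $Z(0)=0$. Conjugating by $T$ turns each exponential into one generated by a matrix of small norm: $e^{A\tau} = Te^{\tilde J\tau}T^{-1}$ and $e^{(A+E)\tau} = Te^{(\tilde J+\hat E)\tau}T^{-1}$ with $\hat E = T^{-1}ET$, so that $\|\hat E\|_2 \le K(SD)\,\|E\|_2$. The elementary estimate $\|e^{M\tau}\|_2 \le e^{\|M\|_2\tau}$ then yields $\|e^{A\tau}\|_2 \le K(SD)\,e^{\epsilon\tau}$ and $\|e^{(A+E)\tau}\|_2 \le K(SD)\,e^{(\epsilon + K(SD)\|E\|_2)\tau}$.

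Next I would substitute these three estimates into the Duhamel integral and evaluate it in closed form. Pulling out the common factor $e^{\epsilon t}$ and using $\int_0^t e^{K(SD)\|E\|_2 s}\,ds = (e^{K(SD)\|E\|_2 t}-1)/(K(SD)\|E\|_2)$, the two stray powers of $K(SD)\|E\|_2$ cancel and I obtain the asserted right-hand side $K(SD)\,e^{\epsilon t}\bigl(e^{K(SD)\|E\|_2 t}-1\bigr)$, at this point as a bound on the numerator $\|e^{(A+E)t}-e^{At}\|_2$ of $\phi_{(A,E)}(t)$.

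The main obstacle is then the denominator $\|e^{At}\|_2$ in the definition of $\phi_{(A,E)}(t)$. For stable $A$ this quantity can fall well below $1$, so one cannot merely claim $\phi_{(A,E)}(t) \le \|e^{(A+E)t}-e^{At}\|_2$; at the same time the numerator estimate above is itself loose for stable $A$, because $\|e^{M\tau}\|_2 \le e^{\|M\|_2\tau}$ discards the decay of the spectrum and keeps only the growth budget $\epsilon$. The delicate point is to account for the forward propagator and the normalizing factor together, so that the decay inflating the numerator bound is precisely the decay shrinking the denominator, leaving a \emph{single} $e^{\epsilon t}$ and a single power of $K(SD)$ rather than the squared constants a term-by-term split would produce. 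This balanced accounting is exactly what is carried out in Table~4.1~(4.12) of \cite{kaagstrom1977bounds}, on which I would rely for the tight constants; as a sanity check, both sides vanish as $t\to 0$, and $K(SD)$ reduces to $1$ when $A$ is normal (so that $S$ may be taken unitary and $D=I$).
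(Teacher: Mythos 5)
The paper offers no proof of this lemma --- it is imported verbatim from Table~4.1~(4.12) of \cite{kaagstrom1977bounds} --- so your primary justification by citation coincides exactly with what the paper does, and your supplementary Duhamel sketch is sound up to the numerator bound $\|e^{(A+E)t}-e^{At}\|_2 \le K(SD)\,e^{\epsilon t}\bigl(e^{K(SD)\|E\|_2 t}-1\bigr)$, with the division by $\|e^{At}\|_2$ correctly flagged as the one step you do not close. For completeness: in the source that last step is handled by keeping the eigenvalue (diagonal) part of the scaled Jordan form separate from the part whose norm is bounded by $\epsilon$, so that a factor $e^{\alpha(A)t}$ appears explicitly in the numerator estimate and is cancelled by the lower bound $\|e^{At}\|_2 \ge \rho\bigl(e^{At}\bigr) = e^{\alpha(A)t}$, which is what leaves a single $e^{\epsilon t}$ and a single power of $K(SD)$ rather than the squared constants of the term-by-term split.
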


\begin{theorem}
\label{thm:supBloatForm2}
$\phi_{(A,\Lambda)}(t) \leq K(SD)e^{\epsilon t}( e^{K(SD)\cdot||\Lambda||_2 t} - 1)$;  where $K$, $S$, $D$, and $\epsilon$ are as defined in Lemma~\ref{lem:kag2}. 
\end{theorem}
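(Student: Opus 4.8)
The plan is to mirror exactly the argument used for Theorem~\ref{thm:supBloatForm}, exploiting the fact that the bound in Lemma~\ref{lem:kag2} is monotone in the single scalar quantity $||E||_2$. First I would observe that among the objects $K(SD)$, $\epsilon$, $S$, and $D$ appearing in Lemma~\ref{lem:kag2}, none depends on the perturbation $E$: the Jordan decomposition $SJS^{-1}$ and the diagonal scaling $D$ are derived entirely from the nominal matrix $A$, so $K(SD)$ and $\epsilon$ are constants as $E$ ranges over the interval matrix $\Lambda$. Consequently the only $E$-dependent term on the right-hand side of Lemma~\ref{lem:kag2} is the factor $e^{K(SD)\cdot||E||_2 t}-1$.

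Next I would record that this factor is monotonically increasing in $||E||_2$, since $K(SD)\ge 0$ and $t\ge 0$ make $x\mapsto e^{K(SD)\,x\,t}-1$ a nondecreasing function of $x$. Combining this with the definition $\phi_{(A,\Lambda)}(t)=\sup_{E\in\Lambda}\phi_{(A,E)}(t)$ and the pointwise bound from Lemma~\ref{lem:kag2}, the nonnegative constant prefactor $K(SD)e^{\epsilon t}$ can be pulled outside the supremum, which then only acts on the monotone envelope, giving
$$\phi_{(A,\Lambda)}(t) \;\le\; K(SD)e^{\epsilon t}\Big(e^{K(SD)\cdot(\sup_{E\in\Lambda}||E||_2)\,t}-1\Big).$$
Finally I would invoke Lemma~\ref{lem:supremum}, which identifies $\sup_{E\in\Lambda}||E||_2$ with the interval-matrix norm $||\Lambda||_2$, to conclude $\phi_{(A,\Lambda)}(t)\le K(SD)e^{\epsilon t}(e^{K(SD)\cdot||\Lambda||_2 t}-1)$, as claimed.

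I do not anticipate a genuine obstacle, as the structure is identical to the proof of Theorem~\ref{thm:supBloatForm}. The one point that genuinely requires care, rather than being a formal triviality, is verifying the $E$-independence of $K(SD)$ and $\epsilon$: if the scaling $D$ (and hence the condition number $K(SD)$ and the bound $\epsilon$ on $||D^{-1}JD||_2$) were allowed to be re-optimized for each perturbed matrix $A+E$ rather than being fixed by $A$ alone, the supremum would no longer factor so cleanly through the monotone envelope. Since Lemma~\ref{lem:kag2} pins these quantities to the Jordan form of the nominal $A$, the factorization is valid and the extension to interval matrices is immediate.
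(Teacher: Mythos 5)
Your proposal is correct and follows essentially the same route as the paper, which proves this theorem by citing the monotonicity of the bound in Lemma~\ref{lem:kag2} with respect to $||E||_2$ together with Lemma~\ref{lem:supremum}, exactly as you do. Your additional observation that $K(SD)$ and $\epsilon$ are fixed by the nominal matrix $A$ and do not vary with $E$ is a worthwhile point of care that the paper's one-line proof leaves implicit.
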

\begin{proof}
Similar to the proof of Theorem~\ref{thm:supBloatForm} --- follows from monotonicity of $\phi_{(A,E)}(t)$ (in Lemma \ref{lem:kag2}) with respect to $||E||_2$, and Lemma \ref{lem:supremum}.
\end{proof}

We denote the upper bound provided in Theorem~\ref{thm:supBloatForm2} as \textbf{\texttt{Kagstrom2}}.

\begin{lemma}
\textbf{[From~\cite{10.2307/2156675}, Theorem 1]} Given a matrix $A$ and perturbation $E$ $\in$ $\mathbb{R}^{n \times n}$
\begin{equation}
\phi_{(A,E)}(t) \leq t ||E||_2 e^{(||A||_2 - \alpha(A) + ||E||_2)t}.
\end{equation}
Where, for any square matrix $M$, $\alpha(M)$ is the largest real component of the eigenvalues of $M$.
\end{lemma}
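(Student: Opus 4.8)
The plan is to bound the numerator and denominator of $\phi_{(A,E)}(t) = \frac{||e^{(A+E)t} - e^{At}||}{||e^{At}||}$ separately, using a classical variation-of-parameters identity for the numerator and a spectral estimate for the denominator. The appearance of $\alpha(A)$ is the only genuinely informative feature of the bound, so the argument should be organized around making it enter in the right place.

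First I would express the difference of exponentials through Duhamel's formula. Writing $G(t) = e^{(A+E)t}$ and noting that $G$ solves $\dot G = (A+E)G$ with $G(0) = I$, treating $EG$ as an inhomogeneous forcing term in $\dot G = AG + EG$ gives the integral identity
\begin{equation}
e^{(A+E)t} - e^{At} = \int_0^t e^{A(t-s)}\, E\, e^{(A+E)s}\, ds.
\end{equation}
Taking the $2$-norm, applying submultiplicativity, and using the elementary bounds $||e^{A(t-s)}||_2 \le e^{||A||_2 (t-s)}$ and $||e^{(A+E)s}||_2 \le e^{(||A||_2 + ||E||_2)s}$ (the latter from $||A+E||_2 \le ||A||_2 + ||E||_2$), the factors $e^{\pm ||A||_2 s}$ cancel inside the integral, leaving
\begin{equation}
||e^{(A+E)t} - e^{At}||_2 \le ||E||_2\, e^{||A||_2 t} \int_0^t e^{||E||_2 s}\, ds = e^{||A||_2 t}\big(e^{||E||_2 t} - 1\big).
\end{equation}

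Next, for the denominator I would use that the $2$-norm dominates the spectral radius, $||e^{At}||_2 \ge \rho(e^{At})$, together with the spectral mapping identity $\rho(e^{At}) = \max_i |e^{\lambda_i t}| = e^{\alpha(A)t}$ for $t \ge 0$, where the $\lambda_i$ are the eigenvalues of $A$ and $\alpha(A)$ is their largest real part. This yields $||e^{At}||_2 \ge e^{\alpha(A)t}$, and hence
\begin{equation}
\phi_{(A,E)}(t) \le \frac{e^{||A||_2 t}(e^{||E||_2 t} - 1)}{e^{\alpha(A)t}} = e^{(||A||_2 - \alpha(A))t}\big(e^{||E||_2 t} - 1\big).
\end{equation}
Finally I would convert this into the stated form by the scalar estimate $e^x - 1 = \int_0^x e^u\, du \le x e^x$ with $x = ||E||_2 t$, so that $e^{||E||_2 t} - 1 \le ||E||_2 t\, e^{||E||_2 t}$ and therefore $\phi_{(A,E)}(t) \le t||E||_2 e^{(||A||_2 - \alpha(A) + ||E||_2)t}$, as claimed.

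The norm estimates and the final scalar inequality are routine; the one step carrying real content is the \emph{lower} bound on the denominator, since the naive estimate $||e^{At}||_2 \le e^{||A||_2 t}$ runs in the wrong direction and is useless here. The main obstacle is therefore to get an inequality on $||e^{At}||_2$ in the correct direction, and the clean resolution is to recognize that $e^{\alpha(A)t}$ is exactly the spectral radius of $e^{At}$, which always lower-bounds any induced norm; once that is in hand, the rest follows by the cancellation in the Duhamel integral.
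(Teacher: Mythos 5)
Your proof is correct, and it is worth noting that the paper itself gives no proof of this lemma at all: it is imported verbatim (with a citation) from Van Loan's Theorem 1, so there is no in-paper argument to compare against. Your derivation is a valid self-contained reconstruction: the Duhamel identity $e^{(A+E)t}-e^{At}=\int_0^t e^{A(t-s)}Ee^{(A+E)s}\,ds$, the telescoping of the $e^{\|A\|_2 s}$ factors inside the integral, the lower bound $\|e^{At}\|_2\ge\rho(e^{At})=e^{\alpha(A)t}$ for the denominator, and the scalar inequality $e^x-1\le xe^x$ all check out, and you correctly identify the denominator estimate as the step where $\alpha(A)$ must enter and where the naive upper bound on $\|e^{At}\|_2$ would be useless. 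The only substantive remark is that Van Loan's original argument is essentially this same variation-of-parameters computation but bounds $\|e^{A(t-s)}\|_2\le e^{\nu(A)(t-s)}$ using the logarithmic norm $\nu(A)=\lambda_{\max}\bigl((A+A^*)/2\bigr)$ rather than the cruder $\|e^{A(t-s)}\|_2\le e^{\|A\|_2(t-s)}$, yielding the sharper exponent $(\nu(A)-\alpha(A)+\|E\|_2)t$; since $\nu(A)\le\|A\|_2$, the version stated in the paper (and proved by you) is a valid but slightly weaker corollary of his.
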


\begin{theorem}
\label{thm:supBloatLoan}
$\phi_{(A,\Lambda)}(t) \leq t ||\Lambda||_2 e^{(||A||_2 - \alpha(A) + ||\Lambda||_2)t}$
\end{theorem}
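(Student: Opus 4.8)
The plan is to mirror exactly the argument used for Theorem~\ref{thm:supBloatForm} and Theorem~\ref{thm:supBloatForm2}, namely to reduce the interval-matrix bound to the scalar-matrix bound from Van Loan's result by a monotonicity-plus-supremum argument. First I would recall the single-matrix inequality just stated: for every fixed perturbation $E \in \mathbb{R}^{n \times n}$ we have $\phi_{(A,E)}(t) \leq t \,||E||_2\, e^{(||A||_2 - \alpha(A) + ||E||_2)t}$. The key structural observation, exactly as in the previous two theorems, is that the right-hand side is a function of $E$ only through its $2$-norm $||E||_2$, and that this function is monotonically increasing in $||E||_2$ (the prefactor $t\,||E||_2$ is increasing, and the exponential $e^{(||A||_2 - \alpha(A) + ||E||_2)t}$ is increasing in $||E||_2$ for $t \geq 0$, so their product is increasing).

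Next I would invoke the definition $\phi_{(A,\Lambda)}(t) = \sup_{E \in \Lambda} \phi_{(A,E)}(t)$ together with the single-matrix bound to write
\begin{equation}
\phi_{(A,\Lambda)}(t) = \sup_{E \in \Lambda} \phi_{(A,E)}(t) \leq \sup_{E \in \Lambda}\Big( t\,||E||_2\, e^{(||A||_2 - \alpha(A) + ||E||_2)t}\Big).
\end{equation}
Because the bracketed expression is a monotonically increasing function of the single scalar $||E||_2$, the supremum over $E \in \Lambda$ is attained by pushing $||E||_2$ to its largest attainable value. By Lemma~\ref{lem:supremum}, $\sup_{E \in \Lambda} ||E||_2 = ||\Lambda||_2$, and this quantity is computable via Theorem~7 of~\cite{farhadsefat2011norms} as noted after that lemma. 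Substituting $||\Lambda||_2$ for $||E||_2$ in every occurrence then yields $\phi_{(A,\Lambda)}(t) \leq t\,||\Lambda||_2\, e^{(||A||_2 - \alpha(A) + ||\Lambda||_2)t}$, which is the claimed bound.

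The step I expect to require the most care is the interchange of the supremum with the monotone function, i.e.\ justifying $\sup_{E}\, g(||E||_2) = g\big(\sup_E ||E||_2\big)$ for the increasing function $g(s) = t\,s\,e^{(||A||_2 - \alpha(A) + s)t}$. This is legitimate because $g$ is continuous and nondecreasing on $\mathbb{R}_{\geq 0}$ and the set $\{||E||_2 : E \in \Lambda\}$ is a bounded interval whose supremum $||\Lambda||_2$ is itself attained (the Max SV Candidate matrix $\Xi(\Lambda)$ of Definition~\ref{def:maxSVCand} realizes it), so the supremum of $g \circ ||\cdot||_2$ over $\Lambda$ equals $g$ evaluated at $||\Lambda||_2$. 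Note that $\alpha(A)$ and $||A||_2$ depend only on the nominal matrix $A$ and are therefore constants throughout the maximization, so they pose no obstacle; the entire $E$-dependence is confined to $||E||_2$, which is precisely what makes the monotonicity reduction clean. The proof is thus essentially identical in spirit to the two preceding theorems, and I would state it as such for brevity.
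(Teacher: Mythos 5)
Your proposal is correct and follows exactly the paper's argument: the paper proves this theorem by the same monotonicity-in-$||E||_2$ reduction combined with Lemma~\ref{lem:supremum}, simply citing the proofs of Theorems~\ref{thm:supBloatForm} and~\ref{thm:supBloatForm2}. Your additional care in justifying the interchange of the supremum with the monotone function is a welcome elaboration of a step the paper leaves implicit.
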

\begin{proof}
Similar to the proof of Theorem~\ref{thm:supBloatForm2}.
\end{proof}

We denote the upper bound provided in Theorem~\ref{thm:supBloatLoan} as \textbf{\texttt{Loan}}. 

We would like to inform the reader that \cite{kaagstrom1977bounds} and \cite{10.2307/2156675} contain many such formulas for computing an upper bound on $\phi_{(A,E)}(t)$. The authors focused on three of these specific formulas primarily because they yielded the best results on the standard verification benchmarks.
\emph{Theorems~\ref{thm:supBloatForm},~\ref{thm:supBloatForm2} and~\ref{thm:supBloatLoan} summarize the new contributions of our paper in extending the perturbation theory based techniques.} 
Computing the reachable set of a linear system with uncertainty is, therefore, a two step process.
First, we compute the reachable set of the dynamical system $\dot{x} = Ax$.
Second, we compute an upper bound on the relative error for the bounded uncertainties (i.e., $\phi_{(A,\Lambda)}(t)$) and \emph{bloat} the reachable set computed in the first step.
It is easy to prove that this procedure results in an over approximation of the reachable set for linear systems with uncertainties. The reachable set computation technique, proposed in this section, is computationally very efficient --- one merely needs to compute the bloating factors, as given in Theorems~\ref{thm:supBloatForm},~\ref{thm:supBloatForm2} and~\ref{thm:supBloatLoan}, and bloat the reachable set of the nominal dynamics to obtain the reachable set of the uncertain system. Our evaluation on several (low and high dimensional) benchmarks show that the reachable set of a linear uncertain system can be computed in just few seconds (less than a second for most of the benchmarks).

In the following sub-section, we extend our results (Theorems ~\ref{thm:supBloatForm},~\ref{thm:supBloatForm2} and~\ref{thm:supBloatLoan}) to use Interval Matrix Frobenius Norm, as it is computationally more efficient, compensating the quality of over-approximation.

\subsection{2-Norm and Frobenius Norm}
The 2-norm of an interval matrix $\Lambda$ can be computed using Theorem 7 of~\cite{farhadsefat2011norms} as follows:
\begin{equation}
||\Lambda||_2 = sup_{|y| = e_n, |z| = e_n} || C_{\Lambda} + (y z^{T}) \cdot \Delta_{\Lambda}||.
\end{equation}
where $e_n = (1, 1, \cdots, 1)^T$. Additionally, the elements of $y$ and $z$ can only be $\pm 1$. Therefore, computing $||\Lambda||_2$, in the worst case, requires computing 2-norm of $2^{2n}$ matrices. That is, $O(2^{2n} f_2(n))$, where $f_{2}(n)$ is the time complexity for computing 2-norm of a matrix. On the other hand, Frobenius norm, which upper bounds 2-norm, is easy to compute. From Theorem 10 of~\cite{farhadsefat2011norms}, we have:
$
||\Lambda ||_F = ||~ |C_{\Lambda}| + \Delta_{\Lambda} ||_{F}
$.
Hence, Frobenius norm of interval matrices can be computed by computing Frobenius norm of a single matrix. Therefore, using this, we can provide conservative upper bounds of \textbf{\texttt{Kagstrom1}}, \textbf{\texttt{Kagstrom2}} and \textbf{\texttt{Loan}} (Theorems \ref{thm:supBloatForm}, \ref{thm:supBloatForm2} and \ref{thm:supBloatLoan}) using Frobenius norm as follows: 

\begin{theorem}
\label{thm:supBloatFormFrob}
(Re-writing Theorem \ref{thm:supBloatForm})\\
$\phi_{(A,\Lambda)}(t) \leq p_{n-1}(||A||_{2}t)(exp(p_{n-1}(||A||_2 t)||\Lambda||_F t)-1)$.
\end{theorem}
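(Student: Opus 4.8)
The plan is to derive this bound as an immediate consequence of Theorem~\ref{thm:supBloatForm} together with a single monotonicity observation, so no new perturbation-theoretic work is required. The key idea is that the right-hand side of Theorem~\ref{thm:supBloatForm} depends on the interval matrix $\Lambda$ only through the scalar $||\Lambda||_2$, and that it is monotonically non-decreasing in that scalar; hence replacing $||\Lambda||_2$ by any larger quantity --- in particular $||\Lambda||_F$ --- can only loosen the inequality while preserving its validity as an upper bound.

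First I would record the norm inequality $||\Lambda||_2 \le ||\Lambda||_F$ for interval matrices. By Definition~\ref{def:matNorms} we have $||A||_2 \le ||A||_F$ for every $A \in \mathbb{R}^{n \times n}$; applying this pointwise to each $A \in \Lambda$ and taking the supremum over $A \in \Lambda$ as in Definition~\ref{def:matIntNorms} yields $||\Lambda||_2 = sup_{A \in \Lambda}||A||_2 \le sup_{A \in \Lambda}||A||_F = ||\Lambda||_F$.

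Second, I would verify monotonicity of the bound. Viewing the right-hand side of Theorem~\ref{thm:supBloatForm} as a function $g(s) = p_{n-1}(||A||_2 t)\big(exp(p_{n-1}(||A||_2 t)\, s\, t) - 1\big)$ of the norm value $s \ge 0$, note that $p_{n-1}(x) = \Sigma_{k=0}^{n-1} x^k/k! \ge 0$ whenever $x \ge 0$ and $t \ge 0$, so the leading factor $p_{n-1}(||A||_2 t)$ is non-negative and the argument of $exp$ increases with $s$. Since $exp$ is increasing, $g$ is non-decreasing on $[0,\infty)$.

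Finally, I would chain the two facts. By Theorem~\ref{thm:supBloatForm} we have $\phi_{(A,\Lambda)}(t) \le g(||\Lambda||_2)$, and by monotonicity together with $||\Lambda||_2 \le ||\Lambda||_F$ we obtain $g(||\Lambda||_2) \le g(||\Lambda||_F)$, which is exactly the claimed right-hand side. There is essentially no hard step here; the only points requiring care are confirming that the supremum-based definition of the interval matrix norm propagates the elementwise inequality $||A||_2 \le ||A||_F$, and that the non-negativity of $p_{n-1}$ (guaranteed by $||A||_2 t \ge 0$) is what makes the substitution a genuine monotone relaxation rather than an unjustified replacement.
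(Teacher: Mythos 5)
Your argument is correct and is exactly the reasoning the paper relies on (the paper does not even spell it out, simply noting that the Frobenius norm upper bounds the 2-norm for interval matrices and then substituting it into the bound of Theorem~\ref{thm:supBloatForm}). Your explicit verification of the supremum-based inequality $||\Lambda||_2 \le ||\Lambda||_F$ and of the monotonicity of the bound in the norm argument fills in the same two steps the paper leaves implicit, so there is nothing to add.
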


\begin{theorem}
\label{thm:supBloatForm2Frob}
(Re-writing Theorem \ref{thm:supBloatForm2})
\\
$\phi_{(A,\Lambda)}(t) \leq K(SD)e^{\epsilon t}( e^{K(SD)\cdot||\Lambda||_F t} - 1)$ where $K$, $S$, $D$, and $\epsilon$ are as defined in Lemma~\ref{lem:kag2}. 
\end{theorem}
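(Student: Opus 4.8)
The plan is to obtain Theorem~\ref{thm:supBloatForm2Frob} as an immediate conservative weakening of Theorem~\ref{thm:supBloatForm2}, exploiting the fact that the Frobenius norm dominates the 2-norm at the level of interval matrices. First I would invoke Theorem~\ref{thm:supBloatForm2} verbatim, which already supplies $\phi_{(A,\Lambda)}(t) \leq K(SD)e^{\epsilon t}( e^{K(SD)\cdot||\Lambda||_2 t} - 1)$. It therefore suffices to replace $||\Lambda||_2$ by the larger quantity $||\Lambda||_F$ on the right-hand side while preserving the direction of the inequality, so that the resulting bound is a (looser) valid upper bound on the same $\phi_{(A,\Lambda)}(t)$.

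Next I would establish the interval-matrix norm comparison $||\Lambda||_2 \le ||\Lambda||_F$. By Definition~\ref{def:matIntNorms} we have $||\Lambda||_p = \sup\{||A||_p \mid A \in \Lambda\}$, and by Definition~\ref{def:matNorms} the pointwise inequality $||A||_2 \le ||A||_F$ holds for every individual matrix $A \in \Lambda$. Taking the supremum over $A \in \Lambda$ on both sides preserves the inequality, yielding $||\Lambda||_2 \le ||\Lambda||_F$.

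Finally I would verify monotonicity of the bounding expression and chain the two facts. The map $x \mapsto K(SD)e^{\epsilon t}(e^{K(SD)\, x\, t} - 1)$ is non-decreasing in $x \ge 0$, since the condition number $K(SD)$ is non-negative, $t$ is positive, and the leading factor $K(SD)e^{\epsilon t}$ is a non-negative constant; hence feeding the larger argument $||\Lambda||_F$ in place of $||\Lambda||_2$ only increases the value. Combining this with the bound from Theorem~\ref{thm:supBloatForm2} gives $\phi_{(A,\Lambda)}(t) \leq K(SD)e^{\epsilon t}( e^{K(SD)\cdot||\Lambda||_2 t} - 1) \le K(SD)e^{\epsilon t}( e^{K(SD)\cdot||\Lambda||_F t} - 1)$, as claimed. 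I expect no genuine obstacle here; the only point that warrants care is confirming that the elementwise inequality $||A||_2 \le ||A||_F$ lifts correctly to the supremum-based interval norms of Definition~\ref{def:matIntNorms}, which is precisely what the second step settles.
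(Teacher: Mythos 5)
Your proposal is correct and follows essentially the same route as the paper, which justifies Theorem~\ref{thm:supBloatForm2Frob} only implicitly by noting that the Frobenius norm upper bounds the 2-norm (so $||\Lambda||_2 \le ||\Lambda||_F$ via Definition~\ref{def:matIntNorms}) and that the bound of Theorem~\ref{thm:supBloatForm2} is monotone in the norm argument. Your write-up merely makes explicit the supremum-lifting step and the monotonicity check that the paper leaves unstated.
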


\begin{theorem}
\label{thm:supBloatLoanFrob}
(Re-writing Theorem \ref{thm:supBloatLoan})
\\
$\phi_{(A,\Lambda)}(t) \leq t ||\Lambda||_F e^{(||A||_2 - \alpha(A) + ||\Lambda||_F)t}$
\end{theorem}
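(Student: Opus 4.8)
The plan is to obtain this Frobenius-norm bound directly from the 2-norm bound of Theorem~\ref{thm:supBloatLoan}, exactly mirroring the passage from Theorem~\ref{thm:supBloatForm} to Theorem~\ref{thm:supBloatFormFrob} and from Theorem~\ref{thm:supBloatForm2} to Theorem~\ref{thm:supBloatForm2Frob}. The whole argument rests on two observations: the right-hand side of the Loan bound is monotone in the perturbation norm, and the interval Frobenius norm dominates the interval 2-norm.

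First I would record the norm comparison for interval matrices. By Definition~\ref{def:matIntNorms}, $||\Lambda||_p = sup\{||A||_p ~|~ A \in \Lambda\}$, and by Definition~\ref{def:matNorms} every individual matrix satisfies $||A||_2 \le ||A||_F$. Hence for each $A \in \Lambda$ we have $||A||_2 \le ||A||_F \le ||\Lambda||_F$; taking the supremum over $A \in \Lambda$ on the left gives $||\Lambda||_2 \le ||\Lambda||_F$.

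Next I would check that the bounding expression $g(x) = t\, x\, e^{(||A||_2 - \alpha(A) + x)t}$ is nondecreasing in $x$ on $[0,\infty)$ for fixed $t > 0$. This is immediate, since $t x$ is nonnegative and nondecreasing while the factor $e^{(||A||_2 - \alpha(A) + x)t}$ is positive and nondecreasing (because $t > 0$), so their product is nondecreasing. Substituting the two values $x = ||\Lambda||_2$ and $x = ||\Lambda||_F$ and using $||\Lambda||_2 \le ||\Lambda||_F$ yields $g(||\Lambda||_2) \le g(||\Lambda||_F)$. Chaining this with Theorem~\ref{thm:supBloatLoan}, which states $\phi_{(A,\Lambda)}(t) \le g(||\Lambda||_2)$, gives $\phi_{(A,\Lambda)}(t) \le g(||\Lambda||_F)$, which is precisely the claimed inequality.

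There is essentially no hard step here; the only thing to watch is the direction of the two inequalities. Because $||\Lambda||_F$ is the larger norm and $g$ is increasing, replacing $||\Lambda||_2$ by $||\Lambda||_F$ can only enlarge the bound. This confirms that the statement is a genuine (conservative) \emph{upper} bound rather than an equality, consistent with the paper's remark that one trades tightness of the over-approximation for the computational convenience of evaluating a single Frobenius norm instead of the $O(2^{2n})$ matrices required for the interval 2-norm.
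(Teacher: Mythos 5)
Your argument is correct and is exactly the reasoning the paper leaves implicit: since $||\Lambda||_2 \leq ||\Lambda||_F$ (via Definitions~\ref{def:matNorms} and~\ref{def:matIntNorms}) and the \textbf{\texttt{Loan}} bound is monotone in the perturbation norm, substituting the Frobenius norm yields a valid, more conservative upper bound. No gaps; this matches the paper's intended justification for Theorems~\ref{thm:supBloatFormFrob}--\ref{thm:supBloatLoanFrob}.
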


In all our experiments, we compute 2-norm of interval matrices (Theorems \ref{thm:supBloatForm}, \ref{thm:supBloatForm2} and \ref{thm:supBloatLoan}) when it is easy to compute (for low dimensional systems). However, for high dimensional systems, we use Frobenius norm of interval matrices (Theorems \ref{thm:supBloatFormFrob}, \ref{thm:supBloatForm2Frob}, \ref{thm:supBloatLoanFrob}).

\section{Reachability Using Set Representations}
\label{sec:setBased}

In reachable set computation techniques, a suitable representation (such as Zonotopes~\cite{girard2005reachability}, Support functions~\cite{frehse2011spaceex}, or Stars~\cite{bak2017hylaa}) is chosen to represent the set of states reached by a dynamical system.
For efficient computation of reachable set, a time step $h$ is chosen, and the linear dynamical system $\dot{x} = Ax$ is converted into a discrete time dynamical system $x^{+} = e^{Ah}x$.
Given an initial set $\Theta$ in the representation, the linear transformation of $\Theta$, \textit{i.e.}, $e^{Ah} \Theta$ is computed in the same representation.

In the case of linear uncertain systems, \textit{i.e.}, $\dot{x} = (A+\Lambda)x$, we perform the same thing and compute a linear uncertain transformation as $x^{+} = e^{(A+\Lambda)h}x$.
Using widely known techniques for computing matrix exponential~\cite{moler2003nineteen,zutshi2012timed}, we compute a discrete time linear uncertain system, where $x^{+} = (\bar{A} + \bar{\Lambda})x$. Since the reachable set computation technique, proposed in this section, only relies on discrete transformation; in this section, we abuse notation and represent the discrete time linear uncertain system as $x^{+} = (A+\Lambda)x$.

In this paper, we use star representation~\cite{duggirala2016parsimonious} for computing reachable sets of linear uncertain systems.
We over-approximate the effect of model uncertainties as time varying environmental inputs, and represent the reachable set 
as a Minkowski sum of the reachable set without inputs and the effect of inputs, as presented in~\cite{bak2017simulation}.
We also provide two new methods to improve the scalability of this approach and evaluate this method on a wide variety of benchmarks.
%

\begin{definition}
\label{def:stars}
A star is a tuple $\langle a, G, P \rangle$, where $a \in \mathbb{R}^n$ is called \emph{anchor}, $G$ is a set of vectors $\{g_1, g_2, \ldots, g_m\}$ where $g_i \in \mathbb{R}^n$ is called \emph{generators}, and $P:\mathbb{R}^{m} \rightarrow \{\top, \bot\}$ is called \emph{predicate}; represents a set of states. Formally, 
$
\langle a, G, P \rangle 
\deq
\{ x~|~ \exists \alpha_1, \ldots, \alpha_m, x = a + \Sigma_{i=1}^{m} \alpha_i g_i, P(\alpha_1, \ldots, \alpha_m) = \top \}.
$
\end{definition}

Given a matrix (linear transformation) $A$, it is easy to see that $A\cdot \langle a,G,P \rangle = \langle A\cdot a, A\cdot G, P \rangle$, where $A\cdot G = \{A\cdot g_1, \ldots, A\cdot g_m\}$.
That is, linear transformation of a star set can be computed extremely quickly without changing the predicate of the star.
Additionally, given two stars $\Theta_1 = \langle a_1, G_1, P_1 \rangle$ and $\Theta_2 = \langle a_2, G_2, P_2 \rangle$, the Minkowski sum in star representation $\Theta_1 \oplus \Theta_2 = \langle a_1 + a_2, G_1 \cup G_2, P_1 \land P_2 \rangle$. %
In the Minkowski sum, we ensure that the variables in $P_1$ and $P_2$ are different, and the $\alpha$-s corresponding to $P_1$ ($P_2$) are coefficients for vectors in $G_1$ ($G_2$ respectively).

\subsection{Computation of Reachable Sets for Linear Uncertain Systems using Stars}
\label{sec:uncertainstars}

To perform the linear uncertain transformation $x^{+} = (A + \Lambda)x$ over a star $\langle a, G, P \rangle$, our approach considers $\Lambda$ as an additional input.
That is, $(A + \Lambda)\cdot \langle a, G, P \rangle$ is computed as $A \cdot \langle a, G, P \rangle \oplus \Lambda \cdot \langle a, G, P \rangle$.
The former part of the summation \big($A \cdot \langle a, G, P \rangle$\big) is easier to compute.
For the latter part, $\Lambda \cdot \langle a, G, P \rangle $, we impose a specific assumption on $P$, and compute a set $u$, \textit{s.t}, $\Lambda \cdot \langle a, G, P \rangle \subseteq u$. 
Note that, as $\Lambda$ is an interval matrix, $\Lambda \cdot \langle a, G, P \rangle$ is a set of generalized stars, $i.e.$, $\Lambda \cdot \langle a, G, P \rangle = \bigcup_{E \in \Lambda} \langle E \cdot a, E \cdot G, P \rangle$. Using Theorem \ref{thm:compU}, we over-approximate $\Lambda \cdot \langle a, G, P \rangle$ (a set of generalized star) with a generalized star $u$, $i.e.$, $\Lambda \cdot \langle a, G, P \rangle \subseteq u$.

\begin{theorem}
\label{thm:compU}
Given an interval matrix $\Lambda$ and a star $\langle a, G, P \rangle$, let $\Lambda \cdot \langle a, G, P \rangle = \bigcup_{E \in \Lambda} E \cdot \langle a, G, P \rangle$. Also consider that $\forall \alpha_1, \ldots, \alpha_m$ such that $P(\alpha_1, \ldots, \alpha_m) = \top$, the upper and lower bounds on $\alpha_j$ are $c_{low}^{j}$ and $c_{high}^{j}$. We have $\Lambda \cdot \langle a, G, P \rangle \subseteq u$, where $u = \langle \bar{0}, I_{n}, P' \rangle$: (a) $\bar{0}$ represents the origin, (b) $I_n$ is the collection of $n$ orthonormal unit vectors that are axis aligned, and (c) $P'$ is a conjunction of box aligned constraints such that $d_{low}^{i} \leq \alpha'_i \leq d_{high}^{i}$, where $[d_{low}^{i}, d_{high}^{i}]$ is the interval obtained by performing $\Lambda \cdot a [i] + \Sigma_{j=1}^{m} [c_{low}^j , c_{high}^j] \cdot \Lambda \cdot g_j$ using interval arithmetic.
\end{theorem}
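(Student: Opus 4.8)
The plan is to prove the containment pointwise: I would take an arbitrary element of $\Lambda \cdot \langle a, G, P\rangle$ and verify that it satisfies the box constraints defining $u$. First I would unpack membership. Any $y \in \Lambda \cdot \langle a, G, P\rangle$ arises from some $E \in \Lambda$ together with coefficients $(\alpha_1, \ldots, \alpha_m)$ satisfying $P(\alpha_1, \ldots, \alpha_m) = \top$, so that
\begin{equation}
y = E\Big(a + \sum_{j=1}^{m}\alpha_j g_j\Big) = E a + \sum_{j=1}^{m}\alpha_j\,(E g_j).
\end{equation}
Because $u = \langle \bar{0}, I_n, P'\rangle$ is generated by the axis-aligned unit vectors under the box predicate $d_{low}^i \le \alpha'_i \le d_{high}^i$, a vector lies in $u$ exactly when each of its coordinates $y[i]$ lies in $[d_{low}^i, d_{high}^i]$ (one simply takes $\alpha'_i = y[i]$). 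Hence the entire claim reduces to the $n$ scalar statements $d_{low}^i \le y[i] \le d_{high}^i$.

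Next I would fix a coordinate $i$ and read off
\begin{equation}
y[i] = (E a)[i] + \sum_{j=1}^{m}\alpha_j\,(E g_j)[i].
\end{equation}
The engine of the proof is the inclusion-isotonicity (fundamental theorem) of interval arithmetic: evaluating an arithmetic expression on real arguments drawn from prescribed intervals yields a value contained in the interval obtained by evaluating the same expression under interval arithmetic. Since $E \in \Lambda$ means $E[i,k] \in \Lambda[i,k]$ for every $k$, this principle gives $(E a)[i] \in (\Lambda \cdot a)[i]$ and $(E g_j)[i] \in (\Lambda \cdot g_j)[i]$, the right-hand sides being the interval matrix-vector products. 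Moreover $P(\alpha) = \top$ forces $\alpha_j \in [c_{low}^j, c_{high}^j]$ by the stated hypothesis on the coordinate bounds of the predicate. Applying inclusion-isotonicity once more to the scalar combination $(E a)[i] + \sum_j \alpha_j (E g_j)[i]$, with each real quantity replaced by the interval that encloses it, shows that $y[i]$ belongs to the interval computed as $\Lambda \cdot a[i] + \sum_{j=1}^m [c_{low}^j, c_{high}^j]\cdot \Lambda \cdot g_j[i]$ in interval arithmetic---which is exactly $[d_{low}^i, d_{high}^i]$. This yields the coordinate bound, and ranging over all $i$ gives $y \in u$.

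The point that deserves the most care---and the closest thing to an obstacle---is that replacing the predicate $P$ by the coordinate-wise box $\prod_j [c_{low}^j, c_{high}^j]$ discards any coupling among the $\alpha_j$, and that interval arithmetic is only subdistributive, so $u$ is in general a strict over-approximation rather than a tight enclosure. I must therefore make sure the argument invokes only \emph{soundness} of these relaxations (every genuine point is captured) and never tightness; this holds because every step above is an enclosure step and all set inclusions point in the same direction. A secondary detail to confirm is that the interval product $[c_{low}^j, c_{high}^j] \cdot (\Lambda \cdot g_j)[i]$ correctly accounts for the sign combinations of its factors, but this is handled automatically by the standard interval-multiplication rule and needs no separate argument.
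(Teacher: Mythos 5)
Your proof is correct and rests on the same two ingredients as the paper's: expanding the star definition to write $y = Ea + \sum_j \alpha_j (E g_j)$ and invoking the soundness (inclusion-isotonicity) of interval arithmetic coordinate-by-coordinate. The paper phrases this as a brief argument by contradiction while you argue the containment directly and in more detail, but the substance is the same.
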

\begin{proof}
Using the expansion of the definition of generalized star (Definition~\ref{def:stars}) and the provable over-approximation of interval arithmetic, we use contradiction to prove this theorem. Let there be an $ 1 \le i \le n$ for which $\Lambda \cdot \langle  a,  G, P\rangle$ $\not\subseteq$ $\langle \bar{0}, I_{n}, P' \rangle$. This would imply $[d^i_{low}, d^i_{high}]$ $\not\subseteq$ $[low(\alpha'_i),high(\alpha'_i)]$, which violates the assumption.
\end{proof}

Intuitively, Theorem~\ref{thm:compU} over-approximates $\Lambda \cdot \langle  a,  G, P\rangle$ with an axis-aligned hyperbox $u$, $i.e.$, $\Lambda \cdot \langle a, G, P \rangle \subseteq u$. Therefore, using Theorem \ref{thm:compU}, given an interval matrix $\Lambda$ and a star $\langle  a,  G, P\rangle$, one can easily compute an $u$, \textit{s.t}, $\Lambda \cdot \langle a, G, P \rangle \subseteq u$.
Using Theorem~\ref{thm:compU}, we can over-approximate the linear uncertain transformation (\textit{i.e.}, $A+\Lambda$) of a star $\theta$ as a Minkowski sum of two stars.
Let $A\cdot \theta \oplus u_0$ be the over-approximation of the reachable set after one transformation, where $\Lambda \cdot \theta \subseteq u_0$ ($u_0$ is obtained using Theorem \ref{thm:compU}).
Applying the transformation of $(A + \Lambda)$ for the second time would yield $A^2 \cdot \theta \oplus A \cdot u_0 \oplus u_1$, where $u_1$ is obtained by applying Theorem~\ref{thm:compU} on $\Lambda (A \cdot \Theta \oplus u_0)$.
The recursive relationship that represents the over-approximation of the reachable set, at time step $k$, is given as follows:
\begin{eqnarray}
ORS_{k} &=& \theta \mbox{ (initial set) }~~~~~~~~~\mbox{if $k$ = 0} \nonumber \\
&=& A\cdot ORS_{k-1} \oplus u_k ~~~~\mbox{ otherwise.} \label{eq:ORScomp}
\end{eqnarray}
This is graphically represented in Figure~\ref{fig:methodORS}. It also follows from Theorem~\ref{thm:compU} that $(A+\Lambda)^k\cdot \theta \subseteq ORS_{k}$.

\begin{figure}
    \includegraphics[height=0.25\textheight]{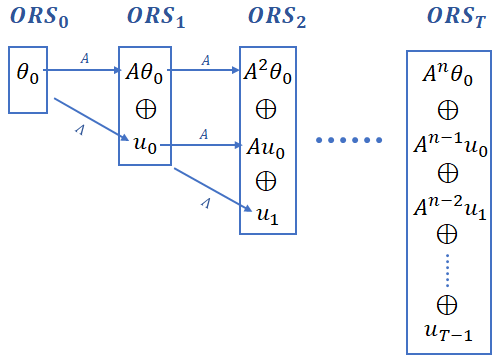}
    \caption{Recurrence relation to compute an over-approximate reachable set of a linear uncertain system, using Set Representation method.}
    \label{fig:methodORS}
\end{figure}

\subsection{Methods to improve performance}
\label{subsec:performanceImprove}
The set representation based method performs Minkowski sum of two stars, at every step, to compute the over-approximate reachable set.
The reachable set at time step $k+1$ is given by $ORS_{k+1} = A \cdot ORS_k \oplus u_k$, due to the Minkowski sum of two generalized stars, at every time step, we increase the number of basis vectors for representing the over-approximate reachable set by $n$ ( where $n$ is the dimension of the system). The number of basis vectors increases by $n$ at each step, because $u_k$ is represented using $n$ basis vectors (as it is computed using Theorem \ref{thm:compU}). 

Due to the increase in the number of basis vectors of the generalized star representing the over-approximate reachable set at each step, the computation of $u_k$, using Theorem \ref{thm:compU}, becomes computationally expensive. Say, the generalized star representing the over-approximate reachable set $ORS_k$, at time step $k$, has $m$ basis vector, where $m>>n$: then the computation of $u_k$ (using Theorem \ref{thm:compU}), \textit{s.t}, $\Lambda \cdot ORS_k \subseteq u_k$, will involve computing maximum and minimum value of $\Lambda \cdot a [i] + \Sigma_{j=1}^{m} [c_{low}^j , c_{high}^j] \cdot \Lambda \cdot g_j$, where $j \in [\![1,m]\!]$, $i \in [\![1,n]\!]$, and $\Lambda$ is an interval matrix. As the number of basis vectors $m$, representing the reachable set, increases at every step (increases by $n$ at each step), the number of interval arithmetic operations needed to compute $u_k$ also increases.
Hence, reducing the number of generators would significantly improve the performance of the reachable set computation.
We present the following two methods for this purpose.

\subsubsection{Interval Arithmetic Based Reduction}
In our first method, given $ORS_k$, we project the generators (along with the coefficients) of $ORS_k$ on the orthonormal axis and compute $n$ ($n$ is the dimension of the system) axis aligned generators. That is, we compute an axis aligned hyperbox (represented as a generalized star with $n$ basis vectors) that over-approximates $ORS_k$.
Given a generalized star $\mathcal{S}_m= \langle a, G, P\rangle$ (representing a reachable set of an $n$ dimensional system) with $m$ basis vectors ($m>>n$), we use Theorem \ref{thm:intervalRed} to compute a generalized star $\mathcal{S}_n$, with $n$ basis vectors, $s.t$, $\mathcal{S}_m \subseteq \mathcal{S}_n$.

\begin{theorem}
\label{thm:intervalRed}
Let a generalized star be $\mathcal{S}_m=\langle a, G, P\rangle$; where $G$ is a set of $m$ basis vectors, and let $P$ be of the form $c_{low}^j \le \alpha_j \le c_{high}^j$. Then $\mathcal{S}_m=\langle a, G, P\rangle$ $\subseteq$ $\mathcal{S}_n$; where $\mathcal{S}_n = \langle \bar{0}, I_{n}, P' \rangle$: (a) $\bar{0}$ represents the origin, (b) $I_n$ is the collection of $n$ orthonormal unit vectors that are axis aligned, and (c) $P'$ is a conjunction of box aligned constraints such that $d_{low}^{i} \leq \alpha'_i \leq d_{high}^{i}$, where $[d_{low}^{i}, d_{high}^{i}]$ is the interval obtained by performing $a [i] + \Sigma_{j=1}^{m} [c_{low}^j , c_{high}^j] \cdot g_j$ using interval arithmetic.
\end{theorem}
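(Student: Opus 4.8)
The plan is to prove the containment $\mathcal{S}_m \subseteq \mathcal{S}_n$ directly, by showing that every point of $\mathcal{S}_m$ satisfies the box constraints $P'$ that define $\mathcal{S}_n$. The whole argument rests on the soundness (provable over-approximation) of interval arithmetic, exactly as in the proof of Theorem~\ref{thm:compU}. In fact, this statement is precisely the special case of Theorem~\ref{thm:compU} obtained by taking the interval matrix there to be the (singleton) identity: setting $\Lambda = \{I_n\}$ turns $\Lambda \cdot a[i] + \Sigma_{j=1}^{m} [c_{low}^j, c_{high}^j]\cdot \Lambda \cdot g_j$ into $a[i] + \Sigma_{j=1}^{m} [c_{low}^j, c_{high}^j]\cdot g_j[i]$, which is exactly the interval defining $[d_{low}^i, d_{high}^i]$ here. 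So I expect the proof to mirror that one closely.

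First I would take an arbitrary point $x \in \mathcal{S}_m$. By the definition of a generalized star (Definition~\ref{def:stars}), there exist coefficients $\alpha_1, \ldots, \alpha_m$ with $P(\alpha_1, \ldots, \alpha_m) = \top$ --- that is, $c_{low}^j \le \alpha_j \le c_{high}^j$ for each $j$ --- such that $x = a + \Sigma_{j=1}^{m} \alpha_j g_j$. Reading this coordinate-wise, for each $i \in [\![1,n]\!]$ we have $x[i] = a[i] + \Sigma_{j=1}^{m} \alpha_j\, g_j[i]$. Next I would bound each coordinate: since $\alpha_j \in [c_{low}^j, c_{high}^j]$, each scalar product $\alpha_j g_j[i]$ lies in the interval $[c_{low}^j, c_{high}^j]\cdot g_j[i]$, and by the soundness of interval arithmetic the sum $x[i]$ lies in $a[i] + \Sigma_{j=1}^{m}[c_{low}^j, c_{high}^j]\cdot g_j[i]$, which is exactly $[d_{low}^i, d_{high}^i]$ by the definition of $P'$. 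Hence $d_{low}^i \le x[i] \le d_{high}^i$ for every $i$.

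Finally I would observe that, because $\mathcal{S}_n$ has anchor $\bar{0}$ and generators $I_n$ (the axis-aligned orthonormal vectors), a point $y$ belongs to $\mathcal{S}_n$ if and only if each coordinate satisfies $d_{low}^i \le y[i] \le d_{high}^i$ --- the star coefficient $\alpha'_i$ simply coincides with $y[i]$. Since $x$ satisfies exactly these constraints and was arbitrary, $\mathcal{S}_m \subseteq \mathcal{S}_n$. There is no genuine obstacle here: the one point requiring care is the invocation of interval-arithmetic soundness, namely that the computed enclosure truly contains every attainable value of the linear combination; this is the same property already relied upon in Theorem~\ref{thm:compU}. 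If one prefers to make this explicit, the cleanest route is the contrapositive used there --- assume some coordinate $x[i]$ escapes $[d_{low}^i, d_{high}^i]$, and derive a contradiction with the fact that $[d_{low}^i, d_{high}^i]$ encloses all values of $a[i] + \Sigma_{j=1}^{m}\alpha_j g_j[i]$ admitted by $P$.
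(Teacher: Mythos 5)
Your proof is correct and follows essentially the same route as the paper's: the paper simply says the result follows as in Theorem~\ref{thm:compU} (i.e., from the soundness of interval arithmetic applied coordinate-wise to $a[i] + \Sigma_{j=1}^{m} \alpha_j g_j[i]$), which is exactly the argument you spell out, and your observation that this theorem is the $\Lambda = \{I_n\}$ special case of Theorem~\ref{thm:compU} is an accurate summary of why the paper treats the two proofs as interchangeable. Your direct, pointwise version is merely a more explicit unpacking of the paper's terse contradiction argument.
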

\begin{proof}
Similar to the proof of Theorem \ref{thm:compU}
\end{proof}
Intuitively, given a generalized star $\mathcal{S}_m$ with $m$ basis vectors, Theorem \ref{thm:intervalRed} computes an axis aligned hyperbox $S_n$, represented as a generalized star with $n$ basis vectors, $s.t$, $\mathcal{S}_m \subseteq \mathcal{S}_n$. This operation is used to reduce the number of generators from $m$ to $n$.
Since the Minkowski sum increases the number of generators by $n$ at each step, applying this technique at regular intervals would yield quadratic improvement in performance.
However, it also results in increased over-approximation of the reachable set.

\subsubsection{Zonotope Based Reduction}
Our second method for reducing the number of generators uses a tool for performing operations on polytopes, called \textbf{\texttt{pypolycontain}} \cite{sadraddini2019linear}\footnote{\url{https://github.com/sadraddini/pypolycontain}}.
In~\cite{sadraddini2019linear}, the authors propose a representation for polytopes that makes checking of polytope containment very efficient. 
As a side effect, it also provides an efficient Zonotope reduction method that uses linear programming.
When provided with an initial set of generators and a Zonotope as input, the tool performs refinement and returns a Zonotope that has same number of generators as the provided initial set of generators and over-approximates the input Zonotope.
Given a generalized star $\mathcal{S}_m=\langle a, G, P \rangle$ representing a reachable set of a $n$ dimensional system, where $G$ is a set of $m$ basis vectors ($m >> n$); using this open source tool, we obtain a Zonotope (therefore a generalized star) $\mathcal{Z}_n$, with $n$ basis vectors, $s.t.$ $\mathcal{S}_m \subseteq \mathcal{Z}_n$.
We modified this tool to take our generalized star as input and return a generalized star as output.
Similar to the first method, this method also gives a quadratic improvement in performance and increases the order of over-approximation of the reachable set.
Additionally, the over-approximation is also dependent on the initial set of generators provided to the tool.

\section{All Uncertainties Are Not Equal}
\label{sec:uncertainties}

\subsection{Relationship between Reachable Sets and Singular Values}
\label{subsec:2xProof}

In this section, we demonstrate the relationship between the reachable set of a linear system and its corresponding singular values.
We first show that the widening of the reachable set (along particular directions), when viewed as a linear transformation, is dependent on the singular values.
We then show that a template based overapproximation of the reachable set is dependent on the maximum singular values.
This relationship between the widening of the reachable set and the singular values would help us understand the effect of specific uncertainties on the reachable set.
%


\begin{figure}
    \includegraphics[ height=0.25\textheight]{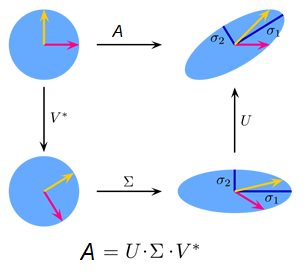}
    \caption{Geometric interpretation of SVD. \cite{enwiki:1013751233}}
    \label{fig:svd}
\end{figure}


\begin{remark}
Given a set $\theta=\mathcal{S}(1)$ (hyper-sphere of unit radius) and a linear transformation $A$, the change in the width of the set $\theta$ (in particular directions) after the linear transformation, is equal to the singular values of $A$.
That is, the width of the set $A \theta$, in evaluated directions, is equal to the singular values.
\end{remark}

Consider the singular value decomposition $U \Sigma V^*$ of the matrix $A$. 
The linear transformation can be understood as a sequence of three linear transformations $V^*$, followed by $\Sigma$, followed by $U$. 
Since $U$ and $V^*$ are unitary matrices, the transformation does not cause the set to widen in any direction. 
Whereas, the widening of the set $A\theta$, in evaluated directions from $U$ and $V^*$, to the the set $\theta$ is equal to the singular values of $A$. 
More specifically, the transformation $A \theta$ is an ordered sequence of rotation, scaling, and rotation. That is, $V^*$ is a rotation, $\Sigma$ is a scaling, and $U$ is a rotation.
This is illustrated in Figure~\ref{fig:svd}.

We now show that a template based overapproximation of the set $A \theta$ is proportional to the maximum singular value of $A$. 
We illustrate this by performing a linear transformation on a unit sphere $\sphere(1)$ and show that the overapproximation of $A \sphere(1)$ is contained within a sphere of radius $\sigma^A_{max}$, the maximum singular value of $A$.

\begin{lemma}
\label{lem:maxvol}
Given the unit sphere $\sphere(1)$ and a matrix $A$, we have $A \times \sphere(1) \subseteq \sphere(\sigma^A_{max})$, where $\sigma^A_{max}$ is the maximum singular value of the matrix $A$.
\end{lemma}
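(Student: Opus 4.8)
The plan is to exploit the singular value decomposition $A = U\Sigma V^*$ that was just described geometrically, and show that every point of the unit sphere is mapped inside the sphere of radius $\sigma^A_{max}$. First I would fix an arbitrary point $x \in \sphere(1)$, so that $\|x\|_2 \le 1$, and observe that it suffices to bound $\|Ax\|_2$ by $\sigma^A_{max}$, since $Ax \in \sphere(\sigma^A_{max})$ precisely when $\|Ax\|_2 \le \sigma^A_{max}$.

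The key steps, in order, are: (i) write $Ax = U\Sigma V^* x$; (ii) use that $V^*$ is unitary to set $y = V^* x$ and note $\|y\|_2 = \|x\|_2 \le 1$, since unitary maps preserve the Euclidean norm; (iii) bound $\|\Sigma y\|_2^2 = \sum_i \sigma_i^2 y_i^2 \le \sigma_{max}^2 \sum_i y_i^2 = \sigma_{max}^2 \|y\|_2^2 \le \sigma_{max}^2$, where $\sigma_{max} = \sigma^A_{max}$ is the largest diagonal entry of $\Sigma$, which sits in the top-left corner because the decomposition is taken with $\Sigma$ sorted in descending order; and (iv) use that $U$ is unitary once more, so that $\|Ax\|_2 = \|U\Sigma y\|_2 = \|\Sigma y\|_2 \le \sigma^A_{max}$. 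Since $x$ was arbitrary, this proves $A \times \sphere(1) \subseteq \sphere(\sigma^A_{max})$.

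Equivalently, and more compactly, I could simply invoke the variational characterization $\sigma^A_{max} = \|A\|_2 = \sup_{\|x\|_2 = 1} \|Ax\|_2$, which yields $\|Ax\|_2 \le \|A\|_2\,\|x\|_2 \le \sigma^A_{max}$ for every $x \in \sphere(1)$ directly; the SVD argument above is just the unfolding of why this operator-norm identity holds, and it has the advantage of matching the rotation--scaling--rotation narrative already in place.

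There is essentially no hard part here — the result is the geometric content of the SVD picture drawn in Figure~\ref{fig:svd}, where the rotations $U$ and $V^*$ do not change widths and only the scaling $\Sigma$ can enlarge the set, by at most its largest factor $\sigma^A_{max}$. The only point requiring a little care is the convention for $\sphere(1)$: whether it denotes the unit ball or its bounding surface. The argument is insensitive to this choice, but I would state explicitly that $x \in \sphere(1)$ means $\|x\|_2 \le 1$ (or $\|x\|_2 = 1$), so that the inequality $\|y\|_2 \le 1$ used in step (iii) is justified, and I would likewise note that $\sigma^A_{max}$ is well defined as the top diagonal entry of $\Sigma$ precisely because of the descending-order convention on the SVD.
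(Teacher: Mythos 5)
Your proof is correct and follows essentially the same route as the paper: both decompose $A = U\Sigma V^*$, use that the unitary factors preserve the unit sphere and that $\Sigma$ scales each axis by at most $\sigma^A_{max}$. Your version merely makes the paper's geometric narrative precise via the norm computation $\|Ax\|_2 = \|\Sigma V^* x\|_2 \le \sigma^A_{max}\|x\|_2$, which is a welcome tightening but not a different argument.
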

\begin{proof}
This proof hinges on the geometric interpretation of the singular value decomposition. 
Consider the SVD of $A$ as $U \Sigma V^*$. Since a unit sphere is symmetric, the linear transformation of $V^* \sphere(1)$ ($i.e.$ a rotation) is same as $\sphere(1)$.
The transformation of $\Sigma$, a diagonal matrix with singular values, scales (widens) the $i^{th}$ axis by the corresponding singular value $\sigma_i$, resulting in an ellipsoid whose axes are defined by the singular values.
The transformation of the ellipsoid with $U$ ($i.e.$ another rotation) will reorient the axes but not cause any  further widening.
Therefore, the maximum distance of any point on $A \sphere(1)$ is bounded by $\sigma^A_{max}$, the maximum singular value of $A$.
Hence, $A \sphere(1) \subseteq \sphere(\sigma^A_{max})$. This has been illustrated in Figure \ref{fig:sigmaBall}.
\end{proof}

In this paper we are interested in understanding the effect of matrix uncertainties on the reachable set. 
Given an uncertain system $\dot{x} = \Lambda x$, from Lemma~\ref{lem:maxvol}, we know that the matrix with the maximum singular value of the matrix exponential would have the largest effect vis-à-vis widening on the reachable set --- this has been formalized in Theorem \ref{thm:2x}.
\emph{Therefore, the uncertainties that will increase the maximum singular value would cause an increase in the width of the overapproximation of the reachable set.}
Therefore, for evaluating the robustness of safety of a linear dynamical system, we introduce perturbations that cause largest change in the maximum singular value. In the following theorem, we show that the reachable set, viewed as a transformation, can be computed using just the maximum singular value of the Max SV Candidate (Definition \ref{def:maxSVCand}).

\begin{theorem}
\label{thm:2x}
Given a transformation $f(\theta)=\Lambda \theta$ where $\Lambda \subset \mathbb{R}^{n \times n}$; and $\theta = \sphere(1)$, then $f(\theta) \subseteq \sphere(\sigma_{\text{max}})$.
\end{theorem}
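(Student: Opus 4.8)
The plan is to reduce the interval-matrix statement to the single-matrix bound already established in Lemma~\ref{lem:maxvol}, and then lift it across the whole interval by a uniform domination argument using the Max SV Candidate (Definition~\ref{def:maxSVCand}). Concretely, I would unfold the definition of the interval matrix transformation, namely $f(\theta) = \bigcup_{A \in \Lambda} A\theta$, so that proving $f(\theta) \subseteq \sphere(\sigma_{\text{max}})$ amounts to showing $A\,\sphere(1) \subseteq \sphere(\sigma_{\text{max}})$ for \emph{every} $A \in \Lambda$ and then taking the union.

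First I would fix an arbitrary $A \in \Lambda$ and invoke Lemma~\ref{lem:maxvol} to get $A\,\sphere(1) \subseteq \sphere(\sigma(A))$, where $\sigma(A)$ is the maximum singular value of $A$. Next I would use the defining property of the Max SV Candidate, $\sigma(A) \le \sigma(A_{\text{max}}) = \sigma_{\text{max}}$, together with the elementary monotonicity of sphere nesting: for $0 \le k_1 \le k_2$ one has $\sphere(k_1) \subseteq \sphere(k_2)$, since $\sphere(k) = \{\,x \mid \|x\| \le k\,\}$. Chaining these two containments gives $A\,\sphere(1) \subseteq \sphere(\sigma(A)) \subseteq \sphere(\sigma_{\text{max}})$ for the chosen $A$. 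Because this bound holds with the same right-hand side $\sphere(\sigma_{\text{max}})$ independently of $A$, taking the union over all $A \in \Lambda$ preserves the inclusion, yielding $f(\theta) = \bigcup_{A \in \Lambda} A\,\sphere(1) \subseteq \sphere(\sigma_{\text{max}})$, which is the claim.

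The main obstacle here is conceptually mild rather than technical: the entire argument rests on having a \emph{single} radius that simultaneously dominates all matrices in the interval, and this is exactly what the Max SV Candidate supplies by definition, so no separate optimization or worst-case search is needed inside the proof. The only points requiring care are (i) making the union step explicit, so that one is genuinely bounding the reachable image of the \emph{set} of matrices and not merely one realization, and (ii) the monotone nesting of spheres, which justifies replacing each per-matrix radius $\sigma(A)$ by the uniform $\sigma_{\text{max}}$. I would also remark that this result is precisely the template-based overapproximation anticipated in the preceding discussion, specializing the geometric reasoning of Lemma~\ref{lem:maxvol} to the uncertain (interval) setting.
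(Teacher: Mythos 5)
Your proposal is correct and follows essentially the same route as the paper, which simply cites Lemma~\ref{lem:maxvol}; you have just made explicit the union over $A \in \Lambda$, the per-matrix bound $A\,\sphere(1) \subseteq \sphere(\sigma(A))$, and the uniform domination $\sigma(A) \le \sigma_{\text{max}}$ supplied by the Max SV Candidate. No gaps.
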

\begin{proof}
Follows from Lemma~\ref{lem:maxvol}.
\end{proof}

\subsection{Ordering the cells of the matrix based on its sensitivity to perturbation}
\label{subsec:ordering}
In this section, we study the effect of introducing uncertainties in different cells of a linear dynamical system. 
%
From Theorem \ref{thm:2x} we know that by studying the effect of an uncertainty on the singular value, we can understand the effect of that uncertainty on the magnitude of change in the reachable set.

The effect of perturbations on the singular values of a matrix has, fortunately, already been studied in~\cite{it:1999-001}.
In~\cite{it:1999-001}, given matrices $A$ and $B$, the singular values of the matrices $A + \epsilon B + O(\epsilon^2)$ are shown to have the form $\sigma_i + k_i \epsilon + O(\epsilon^2)$. 
Additionally,~\cite{it:1999-001} also provides the closed form expression to compute $k_i$ as a function of $A$ and $B$.
Therefore, the matrix with largest $k_i$ causes the largest effect in the singular values as a result of perturbation.

\begin{figure}
    \includegraphics[height=0.32\textheight]{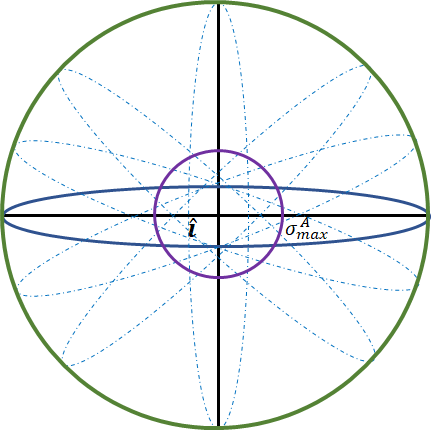}
    \caption{Illustration of the overapproximation being contained in $\sphere(\sigma^A_{max})$. Purple hypersphere: $\sphere(1)$ the initial set that is invariant to transformation by unitary matrix. Blue ellipse: $\sphere(1)$ after applying the transformation $\Sigma$ causes scaling along various dimensions. Green hypersphere: $\sphere(\sigma^A_{max})$ that contains all possible transformations $U$ of the blue ellipse.}
   \label{fig:sigmaBall}
\end{figure}

Using the closed form expression provided in~\cite{it:1999-001}, we can introduce a perturbation in each cell of $A$ and study its effect on the singular value.
%
To introduce a perturbation of $\epsilon$ in the $[i,j]^{th}$ element of the matrix $A$ and observe the change in singular values, we use the expression in~\cite{it:1999-001} with matrix $B$ as a zero matrix except for the $[i,j]^{th}$ element being $1$.
A similar technique for constructing $B$ where $B[i,j] = A[i,j]$ can be used to study the effect of relative perturbation on singular values.
%
%
If perturbation in a cell $(i,j)$ has more impact on the singular value \textit{vis-\`a-vis} the reachable set than a cell $(k,l)$, then Algorithm~\ref{algo:order} orders $(i,j)$ before $(k,l)$.

Algorithm~\ref{algo:order} ranks the cells of the matrix in decreasing order of relative change in the singular values by perturbation. 
That is, a perturbation in the first element of $\mathit{SingularOrder}$ causes the maximum change in the singular values of the matrix $A$ (thus on the reachable set). 
Since we are interested in \emph{relative perturbation}, \textit{i.e.}, the elements of the matrix $A$ are subjected to relative uncertainty ( such as $1\%, 5\%$, etc.), we initiate $B[i,j] = A[i,j]$ (and rest of the elements as 0) in Algorithm~\ref{algo:order}. 

\begin{algorithm}
\SetKwInOut{Input}{input}
\SetKwInOut{Output}{output}
\Input{Linear Dynamics $A$.}
\Output{An ordering on the cells $[i,j]$ of the matrix according to the effect on perturbation on maximum singular value of $A$.}
$B \leftarrow \textit{zeros}[n][n]$\;

\For{ $ 1 \leq i,j \leq n$}{
  $B[i,j]=A[i,j]$\;
  $Ord[i,j] \gets \mathsf{maximumSingularValueChange}(A,B)$ \CommentSty{\big(Eq (3.8) in \cite{it:1999-001}\big)} \; \label{ln:maxpert}
  $B[i,j]=0$\;
}

$\mathit{SingularOrder} \leftarrow$ Sort $(i,j)$ based on decreasing value of $Ord[i,j]$; \\
\Return{$\mathit{SingularOrder}$};
\caption{Algorithm to find an ordering on the cells of the matrix $A$ with decreasing sensitivity to perturbation}
\label{algo:order}
\end{algorithm}

%
For each cell $[i,j]$ of the matrix, Algorithm~\ref{algo:order} constructs the matrix $B$ that is all zeros, except for the cell $[i,j]$ and invokes the closed form expression provided in~\cite{it:1999-001} with the pair $(A,B)$.
This expression will determine the change in maximum singular value of $A$ after applying $\epsilon$ relative perturbation.
The magnitude of this change is stored in $Ord[i,j]$ (line~\ref{ln:maxpert}).
All the cells of the matrix are sorted according to the decreasing order of the effect on maximum singular value.



\subsection{Computing Robustness Threshold}
\label{subsec:robusntess}

We define \textit{robustness threshold} of a linear dynamical system as: \emph{given a set of cells of the dynamics matrix and an unsafe set, the maximum amount of perturbation that can be introduced to the dynamics while being safe}. 

\begin{definition}
Given a dynamical system $\dot{x} = Ax$, initial set $\Theta$, unsafe set $U$, and time bound $T$, the robustness threshold $r$ is given as:
\begin{equation}
r = max_{||\Lambda||} \{ \Lambda \subseteq \mathbb{R}^{n \times n } \:|\: \forall A', A' \in A+\Lambda, \forall t, t \in [0, T], e^{A't}\Theta \cap U = \emptyset \} 
\label{eq:robustness}
\end{equation}
\end{definition}

Given that matrix exponential is a nonlinear function, performing the quantifier elimination and computing the exact value of robustness threshold \emph{exactly} is impractical.
In this section, we provide empirical techniques to approximate the value of robustness threshold.
Our procedure follows a numerical search, i.e., we first determine an uncertainty $\Lambda$ such that the uncertain linear system is safe and then progressively increase the uncertainties, until a safety violation is discovered.
Instead of computing the matrix exponential for all the matrices in the uncertain linear system, we compute the overapproximation of the reachable set using symbolic (Section \ref{sec:perturbation}) and numerical (Section \ref{sec:setBased}) methods.

We provide two heuristics for numerically estimating the value of robustness threshold in this section.
%
Both the proposed heuristics rely on the effect of a perturbation on the the maximum singular value as computed in Algorithm \ref{algo:order}. 
We first pick an initial value for the budget of perturbation (say 1\% relative perturbation to be distributed in the given cells).
Given a bound on the total perturbation that can be introduced in a dynamics, the perturbation should be distributed such that: the cells that have larger effect on the maximum singular value receive a smaller proportion of the total  perturbation and vice-versa. 
Recall that $Ord$ matrix, computed in Algorithm \ref{algo:order}, measures the effect of the cell $[i,j]$ on the maximum singular value.
Therefore, higher the value of $Ord[i,j]$, lower is the amount of perturbation that can be tolerated while keeping the system safe.
%
%
%

We distribute this budget among the various cells according to two heuristics.
%
%
%
The first heuristic, \tnm{Proportional Distribution} is such that the perturbation introduced in cell $[i,j]$ is proportional to the values of $Ord[i',j']$ that are less than $Ord[i,j]$. 
%
The second heuristic, \tnm{Harmonic Distribution} is such that the perturbation introduced in cell $[i,j]$ is proportional to $\frac{1}{Ord[i][j]}$. 
%
We use the baseline of \tnm{Equal Distribution} where the same amount of perturbation is introduced in all cells, irrespective of their effect on maximum singular value.

Once the perturbations for each cell is allocated, we compute the overapproximation of the reachable set for the uncertain linear system and infer safety. 
If the uncertain system is safe, we increase the budget by a constant value (say addition of $10\%$) for perturbation and continue.
%
%
We terminate when any further increase in the perturbation budget will cause the system to be unsafe.
This procedure is formally described in Algorithm~\ref{heu:robustMet}.

\begin{algorithm}
\SetKwInOut{Input}{input}
\SetKwInOut{Output}{output}
\Input{Linear Dynamics $A$, Set of cells $\mathcal{C}$ of $A$, Unsafe set $\mathcal{U}$, time step $t$}
\Output{Compute the robustness metric}

$p \gets$ 0  \CommentSty{(perturbation budget)} \;
$step \gets$ \text{user chosen parameter} \CommentSty{(increase in perturbation after each iteration)} \;
$\Lambda_{old}$=$\{\}$\;

\While{True}{
  $\Lambda \gets$ Perturb cells $A$ according to budget $p$ and user determined distribution\;
  \If{$\mathcal{U} \cap ORS_t(A+\Lambda) \neq \emptyset$}
  {return $||\Lambda_{old}||$\;}
  $p=p+step$\;
  $\Lambda_{old}=\Lambda$
}
\caption{Heuristic to compute robustness metric of linear dynamics}
\label{heu:robustMet}
\end{algorithm}

\section{Case Studies}
\label{sec:caseStudies}
To demonstrate the applicability (and usability) of the artifacts, proposed in this paper, on a variety of real life examples from robotics to medical devices; we implemented our algorithms and heuristics in a python-based prototype tool.
We will open source our tool if the paper gets accepted.
Our tool uses \texttt{numpy}~\cite{oliphant2006guide}, \texttt{scipy}~\cite{2020SciPy-NMeth}, \texttt{mpmath} \cite{mpmath} for matrix multiplications, Gurobi~\cite{gurobi} engine for visualization of the reachable sets, and \texttt{PythonRobotics} \cite{pythonrobotics} for various robotics modules such as Spline Planner and Model Predictive Control (MPC).
All of our experiments were performed on a Lenovo ThinkPad Mobile Workstation with
i7-8750H CPU with 2.20 GHz and 32GiB memory on Ubuntu 18.04 operating system (64 bit).

In this section, we evaluate the various artifacts, proposed in this paper, on the two following case studies from two different domains.

\subsection{Anaesthesia Delivery --- A PK/PD Model}
\label{subsec:pkpd}
Over or under use of anaesthesia can be be fatal to the patient --- over-dose can cause long term detrimental effect, and under-dose might fail to maintain the state of hypnosis, causing a traumatic experience to the patient \cite{ARCH15:Benchmark_Problem_PK_PD_Model}. Though anaesthesia has traditionally been performed manually, with improvements in healthcare infrastructure, there is a push for automating the process; thereby raising a need to verify such safety critical systems. With the aforementioned motivation, in this case study, we study a commonly used anaesthetic drug, propofol, \cite{ARCH15:Benchmark_Problem_PK_PD_Model} vis-à-vis safety with perturbation in model parameters.

\vspace{1ex}
\noindent \textbf{{\large Model:}}
The model considered in this case study has two components, PK and PD: the pharmacokinetics (PK) models the change in concentration of the drug as the body metabolizes it, and pharmacodynamics (PD) models the effect of the drug on the body. The model has three compartments, and the state variables track the concentration of each compartment: (i) the first peripheral compartment $c_1$, (ii) the second peripheral compartment $c_2$, (iii) the plasma compartment $c_p$. The input to the model is the infusion rate of the drug (propofol). The evolution of the states are dependent on several parameters, such as: the weight of the patient ($weight$), the first order rate constants between the compartments, $k_{10}$, $k_{12}$, $k_{13}$, $k_{21}$ and $k_{31}$ \cite{10.1093/bja/aei567}. The aforementioned compartments --- $c_1$, $c_2$ and $c_p$ --- comprise the pharmacokinetics (PK). The pharmacodynamics (PD), on the other hand, is tracked by the state variable $c_e$, which evolves relying on the parameter $k_d$ (the rate constant between
plasma and effect site) and other state variables. The detailed model can be found in Equation 5 of \cite{ARCH15:Benchmark_Problem_PK_PD_Model}. The system is considered safe --- that is, no under/over dose occurs --- if the following concentration levels are maintained: $c_1 \in [1,10]$, $c_2 \in [1,10]$, $c_p \in [1,6]$ and $c_e \in [1,8]$, where the input $u \in [0,200]$.

In this case study, we study the following questions: 
\begin{enumerate}
    \item Impact on the concentration levels of the first and the second peripheral compartment, $c_1$ and $c_2$ respectively, due to small perturbations in parameters $k_{21}$ and $k_{31}$. 
    \item Impact on the concentration level of the plasma compartment, $c_p$, due to small errors in the weight measurement of the patient. 
    \item Impact on the concentration level of the PD component, $c_e$, due to small perturbations in the parameter $k_d$. We answer questions (1) to (3) using the artifacts developed in Section \ref{sec:setBased}. 
    \item Which parameter has more impact on the concentration level of the first peripheral compartment ($c_1$), $k_{21}$ or $k_{31}$? We answer this question using Algorithm \ref{algo:order}. 
    \item The amount of error in the weight measurement of the patient that can be tolerated without violating the safe concentration level of the plasma compartment $c_p$; we answer this question using Heuristic \ref{heu:robustMet}.
    \end{enumerate}

The answers to the above questions, (1) to (5), establish connections between the concentration levels of the various compartments and the model parameters with regards to perturbation; thus helping the practitioners/designers to be mindful of any unfortunate parameter or measurement errors that might occur during the process.

\vspace{1ex}
\noindent \textbf{{\large Results:}}
The dynamics given in Equation 5 of \cite{ARCH15:Benchmark_Problem_PK_PD_Model} was discretized (with a step size 0.01) and the answers to questions (1)-(5) were obtained.
\begin{itemize}
    \item \textit{Answer to Question (1)}. To answer this question, we introduced a small perturbation of $\pm$1.8\% in the parameters $k_{21}$ and $k_{31}$. We observed that starting with an initial concentration level of $c_p \in [2,4]$, $c_1 \in [4,6]$, $c_2 \in [4,6]$, $c_e \in [3,5]$ and $u \in [0,10]$, the system reaches an unsafe concentration level in the first peripheral compartment $c_1$ in 20 time steps, whereas the unperturbed system ($i.e.$, with no perturbation) remains safe. The reachable sets, up-to 20 time steps, have been shown in Fig. \ref{fig:pkpd_c1c2}: The $x$ and the $y$ axis represents the concentration levels $c_1$ and $c_2$ respectively. Green represents reachable sets of the unperturbed system, and blue represents the reachable set of the perturbed system.
    This analysis, using the reachable set computation method proposed in Section \ref{sec:setBased}, took 0.32 seconds.
    \item \textit{Answer to Question (2)}. We introduced a small perturbation of $\pm$0.8\% in the weight measurement of the patient, and observed that starting with an initial concentration level of $c_p \in [2,4]$, $c_1 \in [3,6]$, $c_2 \in [3,6]$, $c_e \in [2,4]$ and $u \in [2,10]$, the system reaches an unsafe concentration level in the plasma compartment $c_p$ in 20 time steps, whereas the unperturbed system remains safe. The reachable sets, up-to 20 time steps, have been shown in Fig. \ref{fig:pkpd_cp}: The $x$ and the $y$ axis represents the time steps and the concentration level $c_p$ respectively. Cyan represents reachable sets of the unperturbed system, and blue represents the reachable set of the perturbed system.
    This analysis, using the reachable set computation method proposed in Section \ref{sec:setBased}, took 0.32 seconds.
    \item \textit{Answer to Question (3)}. We introduced a perturbation of $\pm$5\% in the parameter $k_d$, and observed that starting with the same initial concentration level as before, the system reaches an unsafe concentration level in the PK component $c_e$ in 20 time steps, whereas the unperturbed system remains safe. The reachable sets, up-to 20 time steps, have been shown in Fig. \ref{fig:pkpd_ce}: The $x$ and the $y$ axis represents the time steps and the concentration level $c_p$ respectively. Cyan represents reachable sets of the unperturbed system, and blue represents the reachable set of the perturbed system.
    This analysis, using the reachable set computation method proposed in Section \ref{sec:setBased}, took 0.32 seconds.
    \item \textit{Answer to Question (4)}. Since we are interested in the impact of $k_{31}$ and $k_{21}$ on concentration level of only the first peripheral compartment ($c_1$), we projected our model to the $c_1$ dimension for this analysis. After the projection, we applied Algorithm \ref{algo:order} and found that the parameter $k_{31}$ has more impact than $k_{21}$: $k_{31}$ is more sensitive to perturbation than $k_{21}$ with regards to concentration level $c_1$. With an initial concentration level $c_p \in [2,4]$, $c_1 \in [4,6]$, $c_2 \in [4,6]$, $c_e \in [3,5]$ and $u \in [0,10]$, we separately studied the impact of $\pm$2\% on $k_{31}$ and $k_{21}$ --- perturbation in $k_{31}$ has more impact on the reachable set than perturbation in $k_{21}$. We illustrate this in Fig. \ref{fig:pkpd_topBot}: The $x$ and the $y$ axis represents the time steps and the concentration level of $c_1$ respectively. Cyan represents reachable sets with $\pm$2\% in $k_{21}$, and blue represents the reachable set with $\pm$2\% in $k_{31}$. This analysis, \textit{i.e.}, applying Algorithm \ref{algo:order} took just 0.024 seconds.
    \item \textit{Answer to Question (5)}. With an initial concentration level of $c_p \in [2,4]$, $c_1 \in [3,6]$, $c_2 \in [3,6]$, $c_e \in [2,4]$ and $u \in [2,10]$, we applied Heuristic \ref{heu:robustMet} to find out the amount of perturbation that can be introduced to the weight measurement of the patient without violating the safe concentration level of the plasma compartment $c_p$. In just 4.32 seconds, we found that an error of $\pm$0.4\% can be tolerated in the weight measurement without violating safety up-to 20 time steps.  The reachable sets, up-to 20 time steps and $\pm$0.4\% perturbation in weight, have been shown in Fig. \ref{fig:pkpd_robMet}: The $x$ and the $y$ axis represents the time steps and the concentration level $c_p$ respectively. Cyan represents reachable sets of the unperturbed system, and blue represents the reachable set of the perturbed system.
    
\end{itemize}

\begin{figure}
    \includegraphics[height=0.3\textheight]{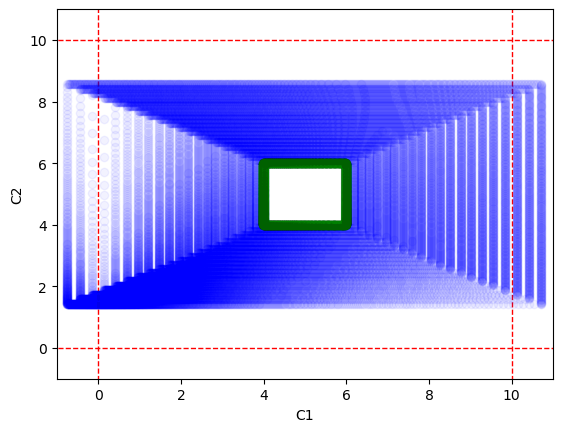}
    \caption{(Anaesthesia: Concentration level of $c_1$ and $c_2$ with perturbation in $k_{21}$ and $k_{31}$) The $x$ and the $y$ axis represents the concentration levels $c_1$ and $c_2$ respectively. Green: Phase plots of the reachable sets of the unperturbed system. Blue: Reachable sets of the perturbed system. Red: Lines demarcating the safe region. That is, $c_1$ is considered safe if it is between 0 to 10. The safety of $c_2$ is interpreted similarly.}
\label{fig:pkpd_c1c2}
\end{figure}

\begin{figure}
     \includegraphics[height=0.3\textheight]{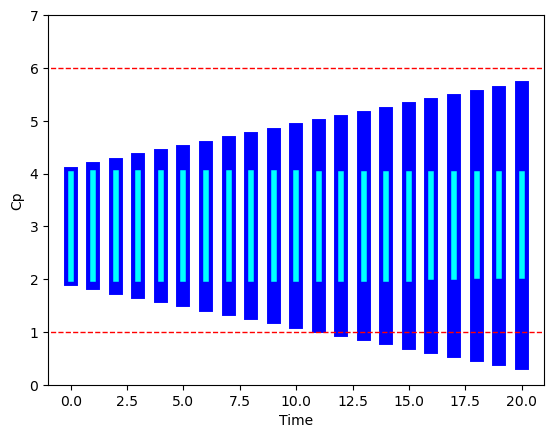}
    \caption{(Anaesthesia: Concentration level of $c_p$ with perturbation in the weight measurement of the patient)  The $x$ and the $y$ axis represents time steps and the concentration level $c_p$ respectively. Cyan: Phase plots of the reachable sets of the unperturbed system. Blue: Reachable sets of the perturbed system. Red: Lines demarcating the safe region. That is, $c_p$ is considered safe if it is between 1 to 6.}
\label{fig:pkpd_cp}
\end{figure}

\begin{figure}
     \includegraphics[height=0.3\textheight]{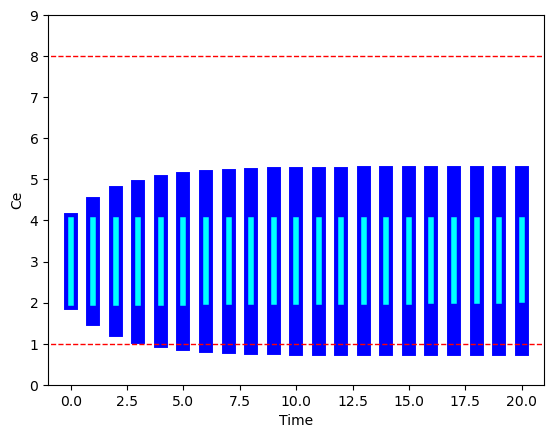}
    \caption{(Anaesthesia: Concentration Level of $c_e$ with perturbation in $k_d$) The $x$ and the $y$ axis represents time steps and the concentration level $c_e$ respectively. Cyan: Phase plots of the reachable sets of the unperturbed system. Blue: Reachable sets of the perturbed system. Red: Lines demarcating the safe region. That is, $c_e$ is considered safe if it is between 1 to 8.}
\label{fig:pkpd_ce}
\end{figure}

\begin{figure}
     \includegraphics[height=0.3\textheight]{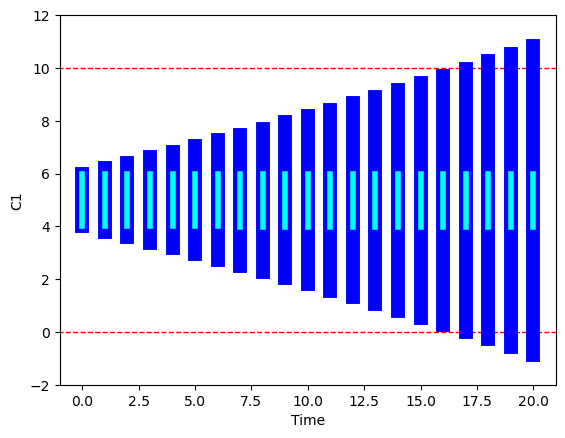}
    \caption{(Anaesthesia: Comparing concentration level of $c_1$ with perturbation in $k_{31}$ and $k_{21}$) The $x$ and the $y$ axis represents time steps and the concentration level $c_1$ respectively. Cyan: Phase plots of the reachable sets with perturbation of $\pm$2\% in $k_{21}$. Blue: Reachable sets with the same amount of perturbation in $k_{31}$. Red: Lines demarcating the safe region. That is, $c_1$ is considered safe if it is between 0 to 10.}
\label{fig:pkpd_topBot}
\end{figure}

\begin{figure}
     \includegraphics[height=0.3\textheight]{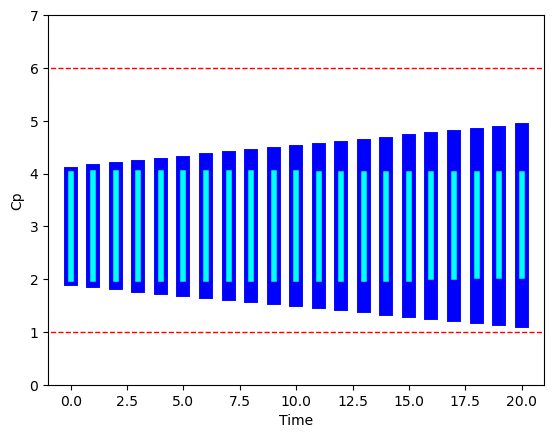}
    \caption{(Anaesthesia: Concentration level of $c_p$ with safe amount of perturbation in the weight measurement of the patient) The $x$ and the $y$ axis represents time steps and the concentration level $c_p$ respectively. Cyan: Phase plots of the reachable sets of the unperturbed system. Blue: Reachable sets with safe amount of perturbation in the weight measurement of the patient. Red: Lines demarcating the safe region. That is, $c_p$ is considered safe if it is between 1 to 6.}
\label{fig:pkpd_robMet}
\end{figure}

\subsection{Mars Rover}
\label{subsec:marsRover}
In this case study, similar to the previous one, we illustrate various artifacts proposed in this paper on a Mars Rover simulated on real Martian Terrain. In the robotics community, reachable set computation with model uncertainties has been previously considered to compute safety of a Mars Rover \cite{ghoshiros}. Here, in this case study, we not just limit our attention to determine safety of a Mars Rover's path, but also find out the parameters that are more sensitive to perturbations and also compute the amount of perturbation the rover can tolerate without violating safety --- to the best of the authors' knowledge, such questions have not been addressed before.

In this case study, a Mars Rover is required to determine the safety of its computed path, in a real Martian Terrain, where the model of the rover has uncertainties in its parameters like velocity, yaw angle, etc. We perform the safety analysis of a given path of the Mars Rover using the reachable set computation method proposed in Section \ref{sec:setBased}, with uncertainties in its model. We simulate the above mentioned scenario using NASA's HiRise Dataset \cite{hirise,nakanoya2020taskrelevant}, where the rover is required to safely maneuver in the Martian Terrain avoiding the steep obstacles. The rover is assumed to follow a linearized bicycle model with perturbations in the dynamics matrix. Similar to \cite{ghoshiros}, the path of the Mars Rover is computed as follows: given a set of waypoints (including the start and end coordinates) in the Martian Terrain, we compute a reference trajectory using Cubic Spline Planner \cite{pythonrobotics}, which is then used to plan the Mars Rover's control inputs using Model Predictive Control (MPC) \cite{pythonrobotics}.

\vspace{1ex}
\noindent \textbf{{\large Model of the Rover:}}
We assume a linearized bicycle model \cite{pythonrobotics} for our rover dynamics, with state variables: (i) $x$-position $x$, (ii) $y$-position $y$, (iii) velocity $v$, (iv) yaw angle $\phi$. The input to the rover dynamics are: (i) acceleration $a$, (ii) steering angle $\delta$. The states of the rover dynamics evolve by relating itself to other state variables through complicated trigonometric functions of $v$, $\phi$ and $\delta$. The detailed model can be found in \cite{pythonrobotics}.

Here, we not just limit our attention to compute reachable sets of the Mars Rover with model uncertainties, \textit{but also use the artifacts, developed in this paper, to identify  (rather) simpler trigonometric terms --- that are sensitive to perturbation --- relating it (the given state variable) to other state variables}. Specifically, we answer the following questions: 
\begin{enumerate}
    \item Impact of perturbation, in the yaw angle $\phi$, on the Mars Rover's path. 
    \item Identify simpler trigonometric terms --- that relate evolution of the $x$ and $y$ dimensions to other state variables, based on its sensitivity to perturbation. 
    \item Amount of perturbation that can be tolerated in the yaw angle $\phi$ without violating the safety of the Mars Rover. 
\end{enumerate}
The answers to the above questions will help the engineer/practitioner to carefully tune the error tolerances of the various sensors available. This is particularly useful in scenarios, such as this, where there is a limited resource available and therefore resources should be allocated very judiciously to various sensors and controllers.

\vspace{1ex}
\noindent \textbf{{\large Results:}} In the following experiments, to maintain consistency for comparing the results, we have used the same waypoints and therefore yielding the same MPC path. In the following figures, the red dots represent high raised obstacles (or unsuitable temperature) in the Martian Terrain.
\begin{itemize}
    \item \textit{Answer to Question (1)}. We compute the reachable sets, with $\pm$15\% perturbation in $\phi$, using the method proposed in Section \ref{sec:setBased}. In Fig. \ref{fig:mars_rs} we show the computed reachable sets: Blue represents the computed reachable sets with $\pm$15\% perturbation in $\phi$; the same reachable set is highlighted in red when it intersects with an unsafe point in the Martian Terrain. This analysis just took  2.30 seconds.
    \item \textit{Answer to Question (2)}. Since we are interested in identifying simpler terms that affects the evolution of just the $x$ and $y$ dimensions, in relation to perturbation, we projected our model to the $x$ and $y$ dimensions for this analysis. After the projection, we applied Algorithm \ref{algo:order} to find the simpler trigonometric terms ($i.e.$ elements of individual cells) that are most and least sensitive to perturbation with respect to $x-y$ dimension. The most sensitive terms for $x$ dimension are: $v\cdot sin(\phi) \cdot \phi$, and $cos(\phi)$ relating $x$ to the state variable $v$. The most sensitive terms for $y$ dimension are: $v\cdot cos(\phi)$ relating $y$ to state variable $\phi$, and $sin(\phi)$ relating $y$ to the state variable $v$. Similarly, the least sensitive terms for $x$ dimension are: $-v \cdot sin(\phi)$ relating $x$ to $\phi$, and $cos(\phi)$ relating $x$ to $v$.  The least sensitive term for $y$ dimension is: $-v \cdot cos(\phi) \cdot \phi$. By introducing a perturbation of $\pm$10\% to the most and the least sensitive terms, we compute the reachable sets separately, illustrating the impact of the top versus the bottom cells (as in this case terms), as returned by Algorithm \ref{algo:order}, on the reachable set. This is illustrated in Fig \ref{fig:mars_comp}: Blue represents reachable set with perturbation in the most sensitive cells, and green represents reachable set with perturbation in the least sensitive cells. Identifying the most and the least sensitive terms, using Algorithm \ref{algo:order}, took 0.04 seconds.
    \item \textit{Answer to Question (3)}. We applied Heuristic \ref{heu:robustMet} to find out the amount of perturbation that can be introduced to the yaw angle without violating the safety. In 2675.39 seconds, we found that an error of $\pm$9\% can be introduced to the yaw angle without violating safety.  The reachable sets with $\pm$9\% perturbation in the yaw angle have been shown in Fig. \ref{fig:mars_robMet}: Blue represents the computed reachable sets with $\pm$9\% perturbation in $\phi$.
\end{itemize}

\begin{figure}
     \includegraphics[height=0.4\textheight]{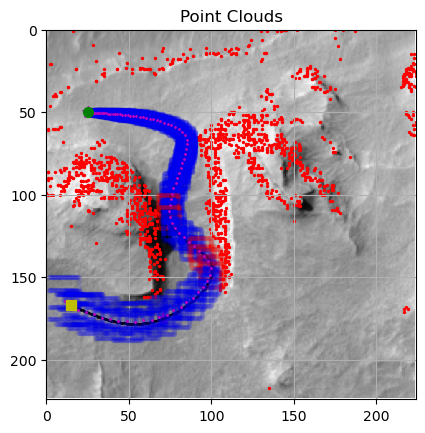}
    \caption{(Mars Rover: Reachable Set with perturbation in the yaw angle) Red Dots: Unsafe points in the Martian Terrain representing high raised obstacle or unsuitable temperature. Blue: Phase plots of the reachable sets with perturbation in the yaw angle. Red: The same reachable sets are highlighted in red when it intersects with an unsafe point in the Martian Terrain.}
\label{fig:mars_rs}
\end{figure}

\begin{figure}
    \includegraphics[height=0.4\textheight]{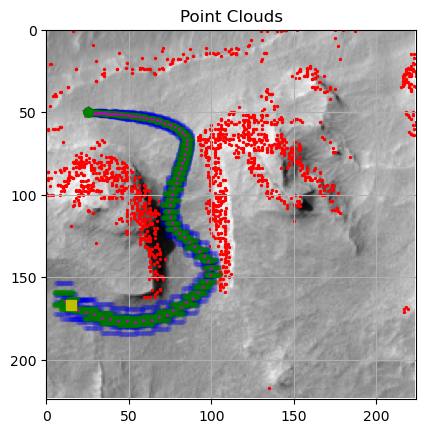}
    \caption{(Mars Rover: Comparing reachable sets with perturbation in the most and the least sensitive terms) Red Dots: Unsafe points in the Martian Terrain representing high raised obstacle or unsuitable temperature. Blue: Phase plots of the reachable sets with perturbation in the most sensitive terms. Green: Reachable sets with the same amount of perturbation in the least sensitive terms.}
\label{fig:mars_comp}
\end{figure}

\begin{figure}
    \includegraphics[height=0.4\textheight]{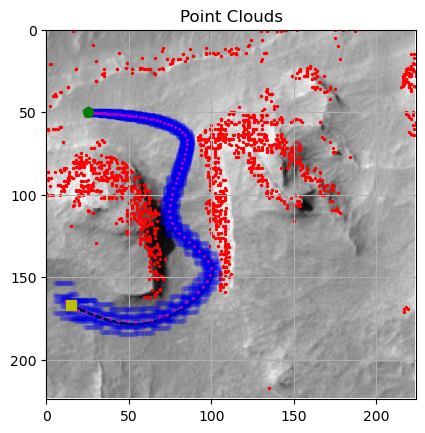}
    \caption{(Mars Rover: Reachable Set with tolerable perturbation in the yaw angle) Red Dots: Unsafe points in the Martian Terrain representing high raised obstacle or unsuitable temperature. Blue: Phase plots of the reachable sets with tolerable perturbation in the yaw angle.}
\label{fig:mars_robMet}
\end{figure}

\section{Evaluation on Benchmarks}
\label{sec:evaluation}

In Section \ref{sec:caseStudies}, we demonstrated the applicability (as well as scalability) of the artifacts, proposed in this paper, on two real-life applications, taken from two different domains --- suggesting the broad-spectrum applicability of our approach. In this section, on the other hand, we further evaluate our methods on several linear system benchmarks.

\vspace{1ex}
\noindent \textbf{{\large Methodology:}} 
We evaluate our methods on a set of linear dynamics benchmarks from the proceedings of ARCH workshop \cite{ARCH}.
The initial set is selected as per the benchmark description and reachable set computation is performed for 20 time units. 
We introduce perturbations in the dynamics based on the description of the benchmark; if the benchmark does not have any inherent perturbations or parameters, we introduce perturbations at random cells. 

To illustrate the symbolic reachable set computation technique, we compute the bloating factors using \textbf{\texttt{Kagstrom1}}, \textbf{\texttt{Kagstrom2}}, and \textbf{\texttt{Loan}} as described in Section~\ref{sec:perturbation}. 
For illustrating the set-based (numeric) technique, as described in Section \ref{sec:setBased}, we compute the discrete linear uncertain system with step size of $0.01$ and compute: (a) reachable set without generator reduction (Equation~\ref{eq:ORScomp}), (b) with generator reduction as per the first heuristic (Interval Arithmetic Based Reduction) at every $500^{th}$ step, and (c) generator reduction using \texttt{pypolycontain} every $500^{th}$ step.

To illustrate the effect of singular values and reachable set, we introduce uncertainties in the five most sensitive and five least sensitive elements in the matrix that change the maximum singular value.
That is, we order the indices as provided by Algorithm~\ref{algo:order} and introduce uncertainties in the first five and the last five cells.
%
%

Finally, we estimate the robustness threshold of the benchmarks using $\tnm{Proportional}$ $ \tnm{Distribution (Prop. Dist.)}$ and $\tnm{Harmonic Distribution (Har. Dist.)}$ heuristics and compare them to the base line of $\tnm{Equal Distribution}$ $\tnm{(Eq. Dist.)}$.
That is, we add uncertainties to the benchmarks according to each of the heuristics and terminate when any increase in uncertainty would violate the safety property.
%

\begin{table}[t]
    \centering
    \resizebox{400pt}{!}{
    \begin{tabular}{| c | c | c | c | c | c | c | c | c | c |}
    \hline
~Benchmark~ & ~Dim~ & ~ Pert ~ & ~w/o Reduction~ & ~Tool~ &  ~ Reduction~ &  ~ \tnm{Kagstrom1} ~ &  ~ \tnm{Kagstrom2} ~ &  ~ \tnm{Loan} ~ &  ~ Ordering ~\\ \hline
Holes & 10 & 4 & 11583 s & 3228 s & 2814 s & 0.004 s & 0.4 s & 0.004 s & 0.01s \\ \hline
ACC & 4 & 2 & 1144 s & 301 s & 266 s &  0.02 s & 0.06 s & 0.014 s & 0.0006 s\\
\hline
Lane Change & 7 & 3 & 3260 s & 966 s & 817 s & 0.007 s & 0.29 s & 0.001 s & 0.004 s\\
\hline
PK/PD & 4 & 3 & 1941 s & 510 s & 447 s & 0.005 s & 9.06 s & 0.0012 s & 0.001 s\\
\hline
Motor & 7 & 2 & 2952 s & 906 s & 747 s & 0.06 s &  0.11 s & 0.008 s & 0.004 s\\
\hline
Girad I & 2 & 2 & 395 s & 100 s & 93 s & 0.002 s & 0.02 s & 0.001 s & 0.0003 s\\
\hline
Girad II & 5 & 3 & 2503 s & 635 s & 580 s & 0.002 s & 0.28 s & 0.0006 s & 0.005 s\\
\hline
Space & 6 & 4 & 3370 s & 931 s & 819 s & 0.002 s & 0.2 s & 0.0006 s & 0.002 s\\
\hline
Aircraft & 4 & 2 & 1438 s & 368 s & 331 s & 0.01 s & 0.04 s & 0.01 s & 0.0008 s\\
\hline
CoOp I & 10 & 6 & 11020 s & 3461 s & 2716 s & 0.004 s & -  & 0.0008 s & 0.01 s\\
\hline
Flight & 16 & 2 & 19546 s & 6324 s & 4706 s & 0.03 s & 2.009 s & 0.002 s & 0.2 s\\
\hline
5-Veh & 15 & 6 & 31742 s & 9190 s & 7620 s & 0.002 s & -  & 0.0006 s & 0.4 s\\
 \hline
    \end{tabular}%
    }
    \caption{ \textbf{Reachable Set Computation.} (Time taken by the numerical approach is for 2050 steps). \textbf{Dim}: Dimensions of the system, \textbf{Pert}: Number of cells perturbed, \textbf{w/o Reduction}: Time taken by the numerical method without any generator reduction, \textbf{Tool}: Time taken by numerical method using \texttt{pypolycontain} for generator reduction, \textbf{Reduction}: Time taken by the numerical method with generator reduction. \textbf{\texttt{Kagstrom1}}: Time taken by \tnm{Kagstrom1}, \textbf{\texttt{Kagstrom2}}: Time taken by \tnm{Kagstrom2}, \textbf{\texttt{Loan}}: Time taken by Loan. \textbf{Ordering}: Time taken by Algorithm \ref{algo:order} to order the cells of the matrix based on sensitivity to perturbation.}
    \label{tab:results1}
\end{table}

\begin{table}[t]
    \centering
    \resizebox{300pt}{!}{
    \begin{tabular}{| c | c | c | c | c | c | c | c | c | c | c |}
    \hline
~Benchmark~ & ~Dim~ & Nom & Prop ~ & Har & Equal & Time\\ \hline
Holes & 10 & 3.079 & 24861.27 & 15434.37 & 2684.47 & 4127.08 s \\ \hline
ACC & 4 & 2.000 & 0.0966 & 0.0966 & 0.10 & 95.93 s\\
\hline
Lane Change & 7 & 2.613 & 0.020 & 0.01 & 0.0091 & 260.14 s\\
\hline
PK/PD & 4 & 2.124 & 0.0037 & 0.003 & 0.0025 & 104.93 s\\
\hline
Girad I & 2 & 1.401 & 0.209 & 0.209 & 0.16 & 0.87 s\\
\hline
Girad II & 5 & 2.192 & 0.092 & 0.09 & 0.09 & 206.71 s\\
\hline
Space & 6 & 2.455 & 0.16 & 0.208 & 0.028 & 566.91 s\\
\hline
Aircraft & 4 & 2.000 & 0.07 & 0.07 & 0.027 & 157.02 s\\
\hline
CoOp I & 10 & 3.132 & 3.94e-7 & 4.01e-7 & 1.66e-6 & 72.943 s\\
\hline
Flight & 16 & 4.002 & 6.92e-7 & 6.92e-7 & 6.92e-7 & 124.86s\\
\hline
5-Veh & 15 & 3.821 & 22.16 & 22.75 & 21.75 & 2734.63 s\\
 \hline
    \end{tabular}%
    }
    \caption{ \textbf{Robustness Threshold (Algorithm \ref{heu:robustMet})}. \textbf{Dim}: Dimensions of the system, \textbf{Nom}: Frobenius norm of the nominal dynamics, \textbf{Prop}: Frobenius norm of the total perturbation introduced using the heuristic \tnm{Prop Dist.}, \textbf{Har}: Frobenius norm of the total perturbation introduced using the heuristic \tnm{Har. Dist.}, \textbf{Equal}: Frobenius norm of the total perturbation introduced using the heuristic \tnm{Eq. Dist.}, \textbf{Time}: Time taken by the heuristic.}
    \label{table:robMet}
\end{table}

\vspace{1ex}
\noindent 
\textbf{{\large Results:}} We divide the results into three main parts. The first part compares the efficiency and the accuracy of the reachable sets computed from symbolic and numerical approaches. The second part illustrates the effect of introducing uncertainties in most and least sensitive parts of the dynamics. The third part estimates the robustness threshold using the various heuristics.

\vspace{0.5ex}
\noindent
\textbf{Reachable Set Computation:} The time taken to compute the reachable set using the symbolic and numerical methods is contained in Table~\ref{tab:results1}.
Since the symbolic approaches just require evaluating closed form expressions, they take very less time to compute the overapproximation of reachable set (all of them run is less than 2 seconds). 
%
The bloating factors for the symbolic methods for ACC \cite{7349170} (more details in Subsection \ref{subsec:acc}) benchmark has been plotted in Figure~\ref{fig:ACC} (Left). The bloating factors for Lane Change benchmark is shown in Figure~\ref{fig:LaneChange} (Left). Note that, in ACC benchmark, performance of \textbf{\texttt{Kagstrom2}} and \textbf{\texttt{Loan}} are same.

The time taken for computing the reachable set using the generalized star representation and various optimizations is also provided in Table~\ref{tab:results1}.
Phase plots of the reachable set with no optimization (Red), interval arithmetic based generator reduction (Green), \texttt{pypolycontain} (Cyan), reachable set without perturbation (Blue), and sample trajectories from initial set with perturbations generated randomly from $\Lambda$ (Purple), for the ACC benchmark is provided in Figure~\ref{fig:ACC} (Right).
While the time taken for computation grows with the time horizon, the reachable set is very accurate, as illustrated by the reachable set of the various sample dynamics.

\vspace{0.5ex}
\noindent
\textbf{Perturbations and Singular Values:} While we have proved in Theorem~\ref{thm:2x} that the effect of perturbation on the reachable set is proportional to the maximum singular value, we illustrate the effect of introducing uncertainties in the five highest sensitive and five least sensitive cells in the dynamics.
That is, we pick the first $5$ and the last $5$ cells in the $\mathit{OrderedList}$ computed by the Algorithm~\ref{algo:order}. The comparison of the two reachable sets are illustrated, for two of the benchmarks, in Figures~\ref{fig:GiradIOrd} and~\ref{fig:fiveOrdAppx}. It is clear that changing the elements that affect the maximum singular values has a disproportionate effect on the reachable sets. While we couldn't include illustration from all other benchmarks, this behavior is observed in all benchmarks.
%

\begin{figure}
    \includegraphics[height=0.25\textheight]{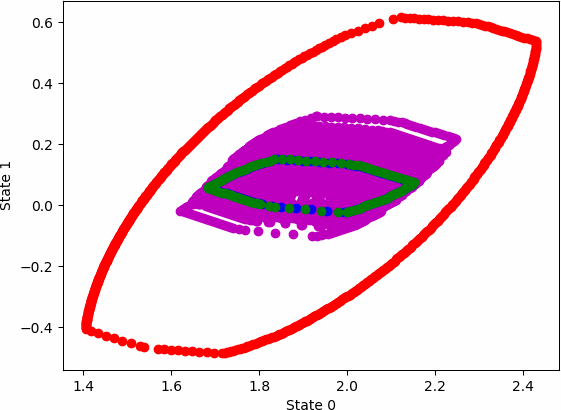}
    \caption{(Effect of perturbation on various cells, as ordered by Algorithm \ref{algo:order}, on the benchmark Five Vehicle Platoon) Red: Reachable set after perturbing the top 5 cells as returned by Algorithm \ref{algo:order}. Green: Reachable set after perturbing the bottom 5 cells as returned by Algorithm \ref{algo:order}. Blue: Reachable set with no perturbation. Purple: Sample trajectories from initial set with perturbations generated randomly from $\Lambda$}
\label{fig:fiveOrdAppx}
\end{figure}

\vspace{0.5ex}
\noindent
\textbf{Estimation of Robustness Threshold:} The robustness threshold estimated using the heuristics proposed in Section~\ref{sec:uncertainties} is provided in Table~\ref{table:robMet}.
From the description of Algorithm~\ref{heu:robustMet}, we get that the robustness of the system would depend on the unsafe set chosen.
In our case, we pick the unsafe set very close to the trajectories starting from the initial set (at a constant distance from the trajectories near the initial set). 
The projections of reachable set for the same uncertainty distributed according to the two heuristics \tnm{Prop. Dist.} and \tnm{Har. Dist.} on Lane Change Maneuvers for Autonomous Vehicles \cite{7553499} and Girad I \cite{10.1007/978-3-540-31954-2_19} benchmarks are provided in Figure \ref{fig:heu_distro} (details provided in Subsections \ref{subsec:laneChange} and \ref{subsec:other_benchmarks}).
In almost all the benchmarks, \tnm{Prop. Dist.} and \tnm{Har. Dist.} heuristics would discover uncertainties with larger magnitude that are safe when compared to \tnm{Eq. Dist.}. 
Notice from Table~\ref{table:robMet} that except for the cases that are very robustly safe (Holes, 5-Veh, etc), we could only introduce perturbations with Frobenius norm roughly $1-10\%$ of the system.
This further illustrates that for estimating the robustness of safety of a linear system with respect to model uncertainties, for performing adversarial analysis, one should focus on introducing uncertainties that increase the maximum singular values.

%

\subsection{Benchmarks}
In the rest of this section, we provide the details of the various benchmarks on which our algorithms and heuristics were evaluated.

\begin{figure}[t]
\centering
\includegraphics[height=0.25\textheight]{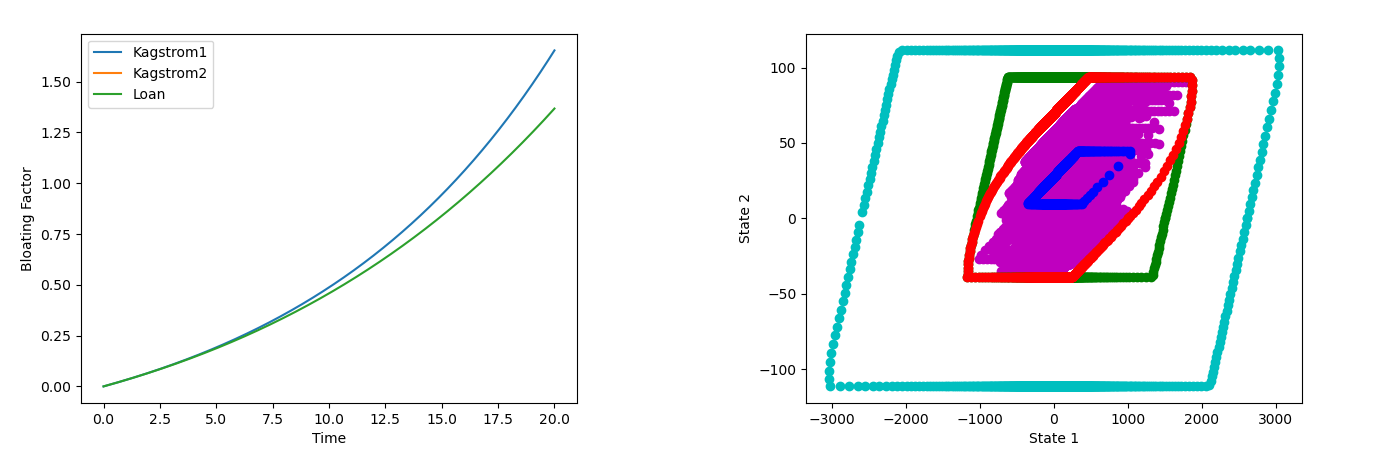}
\caption{\textbf{Left}: (Bloating Factors of ACC) The computed bloating factors (\tnm{Kagstrom1}, \tnm{Kagstrom2}, \tnm{Loan}) from time 0 to 20 with step size of 0.01. \textbf{Right} (Reachable sets of ACC at time step 1950) The reachable sets at time step 1950 returned by our numerical approach is visualized (States $v$, $h$). Red: Phase plots of the reachable set with no reduction, Green: Interval arithmetic based generator reduction, Cyan: \texttt{pypolycontain} reduction, Blue: Reachable set without perturbation, Purple: Sample trajectories from initial set with perturbations generated randomly from $\Lambda$}
\label{fig:ACC}
\end{figure}

\subsubsection{Adaptive Cruise Controller}
\label{subsec:acc}
In \cite{7349170}, Adaptive Cruise Control (ACC) behaves like a ordinary Cruise Control when there is no preceding vehicle in the sight of its sensor --- it maintains a constant speed set by the driver. When a preceding car is detected by the sensor, the car automatically adjusts its speed and maintains a safe distance with the preceding vehicle.
\emph{Model:} The ACC vehicle is modeled as a (lumped) point mass $m$ moving along a straight line with velocity $v$. And braking and engine torque applied to the wheels is lumped as a net force. The dynamics is given in equation 6(a), 6(b) and 6(c) of \cite{7349170}. The state variables are: (i) $v$: Velocity of the vehicle, (ii) $h$: distance from the lead vehicle, $v_L$: velocity of the lead vehicle. The dynamics is dependent on parameters like: $F$, braking and engine torque applied to the wheels is lumped as a net force; $a_L$, acceleration of the lead vehicle.
We introduce perturbation in the acceleration of lead vehicle $a_L$ and $\bar{F}_w$. We chose our initial set as [0,35] $\times$ [5,50] $\times$ [0,35] $\times$ [1,1]. The reachable sets at time step 1950 returned by our numerical approach is visualized (States $v$, $h$) in Figure ~\ref{fig:ACC} (Right). Using our symbolic approach, we computed the bloating factors from time 0 to 20 with step size of 0.01. The result of the bloating factor computation are shown in Figure \ref{fig:ACC} (Left). The ordering of the top 5 cells as returned by our approach, with decreasing sensitivity to perturbation are \{(2, 2), (0, 0), (3, 3), (1, 1), (1, 2)\}. The computation time taken by all the above mentioned methods is given in Table \ref{tab:results1}, \ref{table:robMet}. The reachable set computation time, for the numeric approach, is given for 2050 steps.

\subsubsection{Five Vehicle Platoon}
\label{subsec:5veh}
In this subsection we compute reachable sets of a 15 dimensional, 5 vehicle platoon model \cite{vehPlatoon}. 
This benchmark is a framework of 5 autonomously driven vehicles --- one of the vehicle is a leader, located at the head of the formation, and the rest of the vehicles act as a follower.
The vehicles establish synchronization through communication via network, exchanging information about their relative positions, relative velocities, accelerations measured with on-board sensors.
Each vehicle in the platoon is described by a 3 dimensional state vector --- relative position, its derivative and acceleration.
The main goal is to avoid collisions inside the platoon.
\emph{Model}: For a vehicle $i$: $e_i$ is its relative position, $\dot{e_i}$ is the derivative of $e_i$, and $a_i$ is the acceleration. The state variables are: [$e_1$, $\dot{e_1}$, $a_1$, $e_2$, $\dot{e_2}$, $a_2$, $e_3$, $\dot{e_3}$, $a_3$, $e_4$, $\dot{e_4}$, $a_4$, $e_5$, $\dot{e_5}$, $a_5$]. The dynamics is provided in \cite{vehPlatoon}.
In Five Vehicle Platoon with initial set $[0.9,1.1]^{15}$, we perform the following steps:
(i) plot the reachable set of unperturbed system (represented in \texttt{Blue});
(ii) introduce perturbation of 1\% in the bottom five cells as returned by Algorithm~\ref{algo:order}. Then compute the over-approximate reachable set using our numerical method. (represented in \texttt{Green});
(iii) introduce perturbation of 1\% in the top five cells as returned by Algorithm~\ref{algo:order}. Then plot the reachable set of 50 random samples within the perturbation range. (represented in \texttt{Magenta});
(iv) introduce perturbation of 1\% in the top five cells as returned by Algorithm~\ref{algo:order}. Then compute the over-approximate reachable set using our numerical method. (represented in \texttt{Red}). The reachable sets are shown in Figure \ref{fig:fiveOrdAppx}. In this is example, there is no visible effect of perturbation on the bottom 5 cells. The computation time taken by all the above mentioned methods is given in Table \ref{tab:results1}, \ref{table:robMet}. The reachable set computation time, for the numeric approach, is given for 2050 steps.

\begin{figure}
\centering
\includegraphics[height=0.25\textheight]{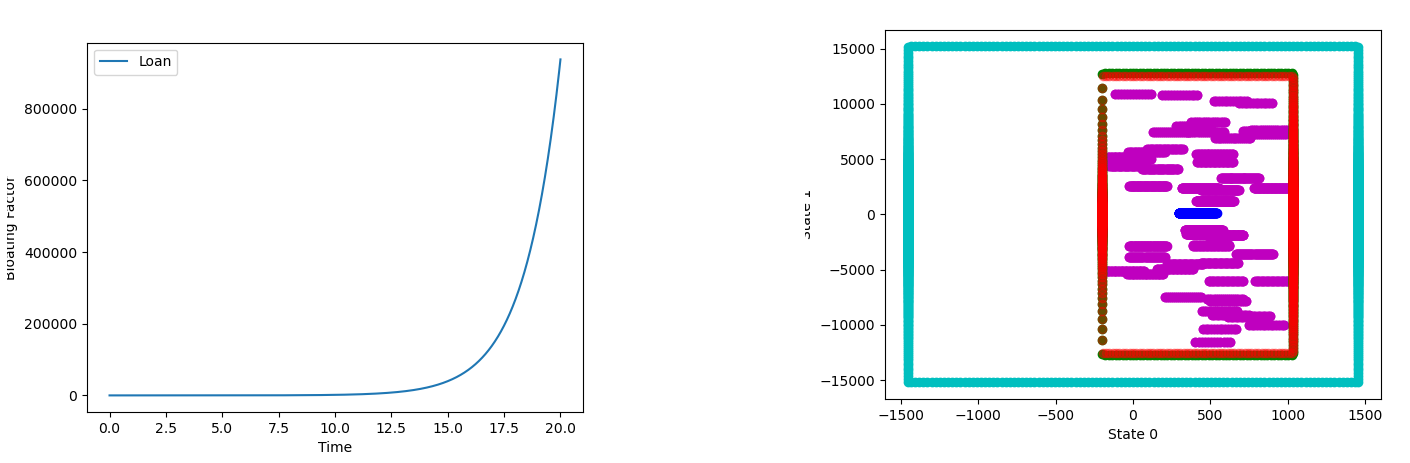}
\caption{\textbf{Left}: (Bloating Factors of Lane Change Maneuver for Autonomous Vehicles) The computed bloating factors (\tnm{Kagstrom1}, \tnm{Kagstrom2}, \tnm{Loan}) from time 0 to 20 with step size of 0.01. \textbf{Right} (Reachable sets of Lane Change Maneuver for Autonomous Vehicles at time step 1950) The reachable sets at time step 1950 returned by our numerical approach is visualized (States $x_r$, $y_r$). The color coding is same as Fig. \ref{fig:ACC} (Right).} 
\label{fig:LaneChange}
\end{figure}

\begin{figure}[t]
\centering
\includegraphics[height=0.24\textheight]{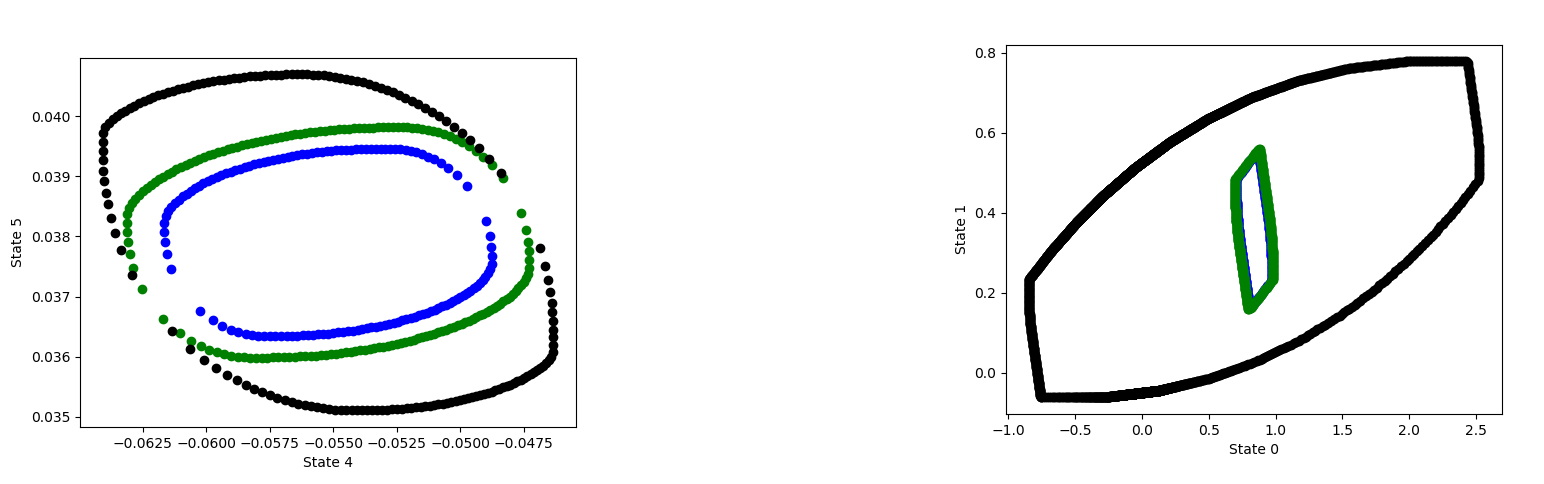}
\caption{\textbf{Left}: (Effects of distributing perturbation on the Lane Change Maneuvers for Autonomous Vehicles benchmark) The reachable set in blue represents a budget is distributed according to \tnm{Har. Dist.}, the reachable set in green represents a the same budget is distributed according to \tnm{Prop. Dist.}, and the reachable set in black represents a the same budget is distributed according to \tnm{Eq. Dist.}. \textbf{Right} (Effects of distributing perturbation on the Girad I benchmark) Distributions as per \tnm{Prop. Dist.}, \tnm{Har. Dist.} and \tnm{Eq. Dist.}. The color coding is same as the left figure.}
\label{fig:heu_distro}
\end{figure}

\subsubsection{Lane Change Maneuver for Autonomous Vehicles}
\label{subsec:laneChange}
In \cite{7553499}, the authors provide a 7 dimensional lane change maneuver model for autonomous vehicles. An autonomous vehicle is an extremely safety critical system, and lane change is one of its key aspects. According to the National Highway Traffic Safety
Administration, the main cause of lane change accidents is failure to detect the other vehicle and almost 80\% of the accidents occur at speeds smaller than 25km/h. \cite{7553499} provides a benchmark, modeling a cooperative lane change maneuver that involves four autonomous cars --- one in the right lane and three in the left lane. The vehicle in the
right lane wants to make a lane change and moving behind the second vehicle and ahead of the third one. It does so, while (i) maintaining safety margins with all the surrounding vehicles, (ii) respecting the traffic rules, and (iii) satisfying
the physical and design limitations.
\emph{Vehicle Dynamics}: This benchmark considers a dynamic bicycle model with a linear tire model. The selection of a dynamic model is made as the tire slip describes important phenomena, e.g. under-steering. At the same time, the model is assumed linear to avoid computational complexity. The model dynamics is given in equation 6 of \cite{7553499}. The state variables are: (i) $x_r$: longitudinal position of the rear axle, (ii) $y_r$: the lateral position of the rear axle, (iii) $\psi$: the yaw angle, (iv) $v_x$: the longitudinal velocity, (v) $v_y$: the lateral velocity at the center of the rear axle, (vi) $\omega$: the yaw rate.
We introduce perturbation in cells \{(3,6), (4,6), (5,6)\} according to the input parameters. We chose our initial set as [0,50] $\times$ [3,3.5] $\times$ [0,0] $\times$ [20,30] $\times$ [0,0] $\times$ [0,0] $\times$ [1,1]. The reachable sets at time step 1950 returned by our numerical approach is visualized (States $x_r$, $y_r$) in Figure~\ref{fig:LaneChange} (Right). Using our symbolic approach, we computed the bloating factors from time 0 to 20 with step size of 0.01. The result of the bloating factor computation are shown in Figure \ref{fig:LaneChange} (Left)  --- here, only best performance has been plotted, since performance of the other two bounds were distorting the plot scale. The ordering of the top 5 cells as returned by our approach, with decreasing sensitivity to perturbation are \{(3, 3), (6, 6), (1, 1), (2, 2), (0, 0)\}. The effect of distributing a perturbation of 20\% on cells \{(3,6), (4,6), (5,6)\} is given in Figure \ref{fig:heu_distro} (Left). The reachable sets in Figure \ref{fig:heu_distro} (Left) are for time step 50. The reachable set in blue represents the budget is distributed according to \tnm{Har. Dist.}, the reachable set in green represents a the same budget is distributed according to \tnm{Prop. Dist.}, and the reachable set in black represents a the same budget is distributed according to \tnm{Eq. Dist.}. The computation time taken by all the above mentioned methods is given in Table \ref{tab:results1}, \ref{table:robMet}. The reachable set computation time, for the numeric approach, is given for 2050 steps.

\begin{figure}
\centering
\includegraphics[height=0.24\textheight]{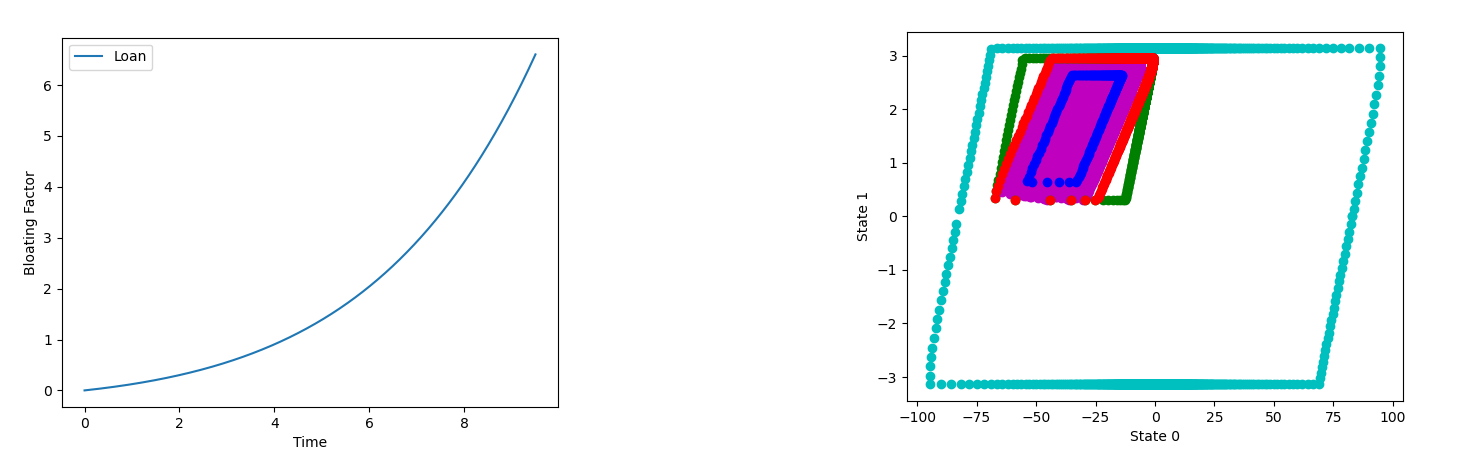}
\caption{\textbf{Left}: (Bloating Factors of Self-Balancing Two-wheeled Robot) The computed bloating factors (\tnm{Kagstrom1}, \tnm{Kagstrom2}, \tnm{Loan}) from time 0 to 20 with step size of 0.01. \textbf{Right} (Reachable sets of Self-Balancing Two-wheeled Robot at time step 950) The reachable sets at time step 950 returned by our numerical approach is visualized (States $\theta_{err}$, $\theta_{ref}$). The color coding is same as Fig. \ref{fig:ACC}.}\label{fig:Holes}
\end{figure}


\subsubsection{Self-Balancing Two-wheeled Robot}
\label{subsec:2robot}
In \cite{DBLP:conf/cpsweek/HeinzOW14}, the authors provide a 10 dimensional benchmark of an open source robotics application --- a self-balancing two-wheeled robot, modelled as an inverted pendulum. The plant is a two-wheeled robot where both wheels are driven by independently controlled servo motors. The plant relates motor voltage of the left and right servo motor. And the controller responsible for balancing the robot, provides two input signals that are used to control transnational and rotational motion that can be driven by any path planning application. If these algorithms yield a state from which the balance controller is unable to recover, then the robot falls over.
\emph{Model}: The state vector of the plant is composed of the average angle of left and right wheel $\theta$, body pitch angle $\psi$, body yaw angle $\phi$, and the respective angular velocities.
The actual input of the plant is an encoding as pulse-width modulation
duty cycles. The state variables are given in Section 2.1 of \cite{DBLP:conf/cpsweek/HeinzOW14}. And its control goal is to steer the transnational speed towards the reference value $\theta_{ref}$ and to steer the body pitch angle $\psi$ towards 0. Given these, the state variables of the controller is given in Section 2.2 of \cite{DBLP:conf/cpsweek/HeinzOW14}. And the controller dynamics is given in appendix of \cite{DBLP:conf/cpsweek/HeinzOW14}.
We introduced perturbation of $\pm20$\% in cells \{(0,3), (1,2), (3,2), (4,3)\}. We chose our initial set as [-1,1] $\times$ [-1,1] $\times$ [1,1] $\times$ [-1,1] $\times$ [-1,1] $\times$ [1,1] $\times$ [1,1] $\times$ [1,1] $\times$ [1,1] $\times$ [1,1]. The reachable sets at time step 950 returned by our numerical approach is visualized (States $\theta_{err}$, $\theta_{ref}$) in Figure~\ref{fig:Holes} (Right). Using our symbolic approach, we computed the bloating factors from time 0 to 20 with step size of 0.01. The result of the bloating factor computation is in Figure \ref{fig:Holes} (Left) --- here, only best performance has been plotted, since performance of the other two bounds were distorting the plot scale. The ordering of the top 5 cells as returned by our approach, with decreasing sensitivity to perturbation are \{(5, 5), (3, 3), (9, 9), (0, 0), (6, 6)\}.
%
%
The effect of distributing a fixed perturbation of on all the cells of the dynamics matrix is given in Figure \ref{fig:heu_distro} (Right). The reachable sets in Figure \ref{fig:heu_distro} (Right) are for time step 10. The reachable set in blue represents the budget is distributed according to \tnm{Har. Dist.}, the reachable set in green represents a the same budget is distributed according to \tnm{Prop. Dist.}, and the reachable set in black represents a the same budget is distributed according to \tnm{Eq. Dist.}. The computation time taken by all the above mentioned methods is given in Table \ref{tab:results1}, \ref{table:robMet}. The reachable set computation time, for the numeric approach, is given for 2050 steps.

\subsubsection{Other Benchmarks}
\label{subsec:other_benchmarks}
In \cite{10.1007/978-3-540-31954-2_19}, Girad I, the effect of perturbation on various cells, as ordered by Algorithm \ref{algo:order}, at time step 200 with initial set $[0.9,1.1] \times [-0.1,0.1]$ is shown in Figure \ref{fig:GiradIOrd}. The computation time taken by all the above mentioned methods is given in Table \ref{tab:results1}.
We investigated several other benchmarks:  \cite{10.1007/978-3-540-31954-2_19} 
(Girad II), \cite{chan2017verifying} (Spacecraft), \cite{DBLP:conf/emsoft/LalP15} (Aircraft Dynamics), \cite{ARCH15:Networked_Cooperative_Platoon_of} (CoOpI), \cite{inproceedings} (Flight), Five Vehicle Platoon (5-Veh) \cite{vehPlatoon}. We are not able to provide the plots for these benchmarks due to lack of space. The computation time taken by all the above mentioned methods is given in Table \ref{tab:results1}, \ref{table:robMet}. The reachable set computation time, for the numeric approach, is given for 2050 steps.

\begin{figure}
   \includegraphics[height=0.25\textheight]{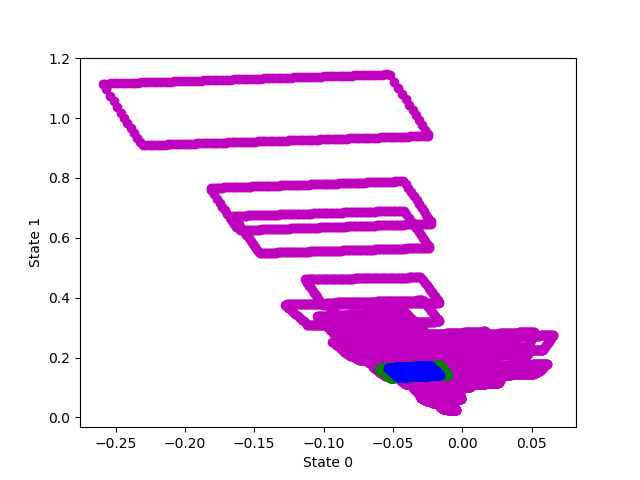}
    \caption{(Effect of perturbation on various cells, as ordered by Algorithm \ref{algo:order}, on the benchmark Girad I) Magenta: Sample reachable sets after perturbing the top 5 cells as returned by Algorithm \ref{algo:order}. Green: Reachable set after perturbing the bottom 5 cells as returned by Algorithm \ref{algo:order}. Blue: Reachable set with no perturbation.}
\label{fig:GiradIOrd}
\end{figure}

\section{Conclusions and Future Work}
\label{sec:conclusion}
In this paper, we proposed two approaches to compute reachable sets of linear uncertain systems, numeric and symbolic. 
This is the first work which extends the perturbation based techniques to interval uncertainties. 
On one hand, the proposed symbolic approach is computationally very efficient, but it is very conservative. 
On the other hand, the numeric approach, though computationally expensive, is more accurate. 
%

Finally, we provide more insights into the effects of uncertainties on the reachable set. 
We establish a relationship between the reachable set and the maximum singular value of the uncertain dynamics. 
Using this relation, we propose an algorithm to order the cells of the dynamics matrix based on its sensitivity to perturbation \textit{vis-à-vis} reachable sets. 
Leveraging the ordering algorithm, we propose and evaluate two new heuristics to compute robustness threshold of a linear dynamic system.

\bibliographystyle{ACM-Reference-Format}
\bibliography{main}


\appendix

\end{document}